\title{A Single Exponential-Time FPT Algorithm for Cactus Contraction}
\titlerunning{A Single Exponential-Time FPT Algorithm for Cactus Contraction}
\author{R. Krithika}{Indian Institute of Technology Palakkad, Palakkad, India}{krithika@iitpkd.ac.in}{}{}
\author{Pranabendu Misra}{Chennai Mathematical Institute, Chennai, India.}{pranabendu@cmi.ac.in}{}{}
\author{Prafullkumar Tale}{Indian Institute of Science
Education and Research, Pune, India}{prafullkumar@iiserpune.ac.in}{}{}
\authorrunning{R. Krithika, P. Misra, and P. Tale}
\keywords{Fixed Parameter Tractable Algorithms,
Graph Contraction, Cactus Graphs}
\newcommand{\yesinst}{\textsc{Yes}}
\newcommand{\yesalgo}{\texttt{Yes}}
\newcommand{\noalgo}{\texttt{No}}
\newcommand{\calF}{\mathcal{F}}
\newcommand{\calH}{{\mathcal H}}
\newcommand{\calO}{\ensuremath{{\mathcal O}}}
\newcommand{\OO}{\mathcal{O}}
\newcommand{\calW}{\mathcal{W}}
\newcommand{\calX}{\mathcal{X}}
\newtheorem{branching rule}{Branching Rule}%[section]
\newtheorem*{lemma*}{Lemma}
\newtheorem{marking-scheme}{Marking Scheme}[section]
\newtheorem{observation}{Observation}[section]
\newtheorem*{observation*}{Observation}
\newtheorem{reduction rule}{Reduction Rule}[section]
\newtheorem{Reduction Rule}{Reduction Rule}
\newtheorem{recoloring}{Recoloring}[section]
\newtheorem*{theorem*}{Theorem}
\newcommand{\defparproblem}[4]{
  \vspace{1mm}
\noindent\fbox{
  \begin{minipage}{0.96\textwidth}
  \begin{tabular*}{\textwidth}{@{\extracolsep{\fill}}lr} #1  & {\bf{Parameter:}} #3
\\ \end{tabular*}
  {\bf{Input:}} #2  \\
  {\bf{Question:}} #4
  \end{minipage}
  }
  \vspace{1mm}
}
\begin{document}

\maketitle
%\linenumbers

%% main text

\begin{abstract}
%% Text of abstract
For a collection $\mathcal{F}$ of graphs, 
the $\mathcal{F}$-\textsc{Contraction} problem takes 
a graph $G$ and an integer $k$ as input and decides if 
$G$ can be modified to some graph in $\mathcal{F}$
using at most $k$ edge contractions.
The $\mathcal{F}$-\textsc{Contraction} problem is 
\NP-Complete for several graph classes $\mathcal{F}$.
Heggerners et al.  [Algorithmica, 2014] initiated the study of 
$\mathcal{F}$-\textsc{Contraction} in the realm of parameterized complexity.
They showed that it is \FPT\ if $\mathcal{F}$ is the set of all trees or the set of all paths.
In this paper, we study $\mathcal{F}$-\textsc{Contraction} where $\mathcal{F}$ is 
the set of all cactus graphs and show that we can solve it in 
$2^{\calO(k)} \cdot |V(G)|^{\OO(1)}$ time.
\end{abstract}

\section{Introduction} 
For a collection $\mathcal{F}$ of graphs, 
the $\mathcal{F}$-\textsc{Modification} problem takes a 
graph $G$ with an integer $k$ as input and decides 
if $G$ can be modified to some graph in $\mathcal{F}$ using at most $k$ modifications.
The $\mathcal{F}$-\textsc{Modification} problem is an abstraction of well-studied problems like 
\textsc{Vertex Cover}, 
\textsc{Feedback Vertex Set}, 
\textsc{Odd Cycle Transversal}, and
\textsc{Minimum Fill-In}, to name a few.
The {\sc ${\cal F}$-Contraction} problems are those where edge contractions are the only modification operation allowed.
The {\em contraction} of the edge $e=uv$ in $G$ deletes vertices $u$ and $v$ from $G$, 
and adds a new vertex adjacent to vertices adjacent to $u$ or $v$. 
This process does not introduce self-loops or parallel edges, and the resulting graph 
is denoted by $G/e$. 
Early papers by Watanabe et al.~\cite{watanabe81, watanabe83} and Asano and Hirata~\cite{asano83} showed that {\sc ${\cal F}$-Contraction} is \NP-Complete even when $\calF$ is a collection of for several simple and well-structured graph classes such as paths, stars, trees, etc. 

Heggernes et al.~\cite{tree-contraction} initiated the study of these problems in the realm of parameterized complexity
with the solution size, i.e., the number of allowed contractions, $k$ as the parameter.
They presented single exponential time \FPT\ algorithms for contracting a given graph into a path and a tree.
{\sc ${\cal F}$-Contraction} problems have received much attention in parameterized complexity since then.
Graph contraction problems are more challenging than their vertex/edge deletion/addition counterparts.
One of the intuitive reasons is that the classical \emph{branching technique} does not work even for graph classes $\mathcal{F}$ that have a finite forbidden structure characterization.
In the case of vertex/edge deletion/addition operations, to destroy a structure that forbids the input graph from being in $\mathcal{F}$, one needs to include at least one vertex, edge, or non-edge from that structure in the solution.
This is not true in the case of contractions, as a contraction of an edge outside the forbidden structure can destroy it. 

Despite the inherent difficulties of contraction problems, we know several fixed-parameter tractability results when the parameter is $k$.
A series of papers presented \FPT\ algorithms for {\sc ${\cal F}$-Contraction} when $\calF$ is the collection of 
generalization and restrictions of trees~\cite{agarwal2019parameterized, agrawal2017paths}, 
bipartite graphs~\cite{guillemot2013faster,heggernes2013obtaining}, 
planar graphs~\cite{golovach2013obtaining},
grids~\cite{saurabh2020parameterized},
cliques~\cite{cai2013contracting}, 
bi-cliques~\cite{martin2015computational},
degree constrained graph classes~\cite{Belmonte:2014,golovach2013increasing,DBLP:journals/algorithmica/SaurabhT22}, etc.
Further, fixed-parameter intractability results are known for 
split graphs~\cite{agrawal2019split} and 
chordal graphs \cite{cai2013contracting, lokshtanov2013hardness}.

A \emph{cactus} is a graph in which every edge is in at most one cycle.
Hence, cactus graphs are a generalization of trees.
We formally define the problem addressed in this article.

\defparproblem{\textsc{Cactus Contraction}}{A graph $G$ and an integer $k$}{$k$}{Can we contract at most $k$ edges in $G$ to obtain a cactus?}

%Here, $G/F$ denotes the graph obtained from $G$ by contracting all edges in $F$ .
It is easy to verify that \textsc{Cactus Contraction} is in \NP, and its \NP-Completeness follows from \cite{lossy-fst}.
An edge contraction can reduce the \emph{treewidth}\footnote{See Chapter~$7$ in \cite{cygan2015parameterized} for the definition and application of the parameter.} of a graph by at most one.
Moreover, a cactus graph has treewidth at most $2$.
Hence, if a graph is $k$-contractible\footnote{Please refer to Section~\ref{sec:prelims} for formal definitions and terminologies.} to a cactus, then its treewidth is at most $k + 2$.
The standard application of dynamic programming over an optimum tree decomposition results in an algorithm running in time $2^{\calO(k \log k)} \cdot |V(G)|^{\calO(1)}$.
In this article,  we present an improved algorithm that runs in $2^{\calO(k)} \cdot |V(G)|^{\calO(1)}$ time.

\begin{figure}[t]
\centering
\includegraphics[scale=0.65]{./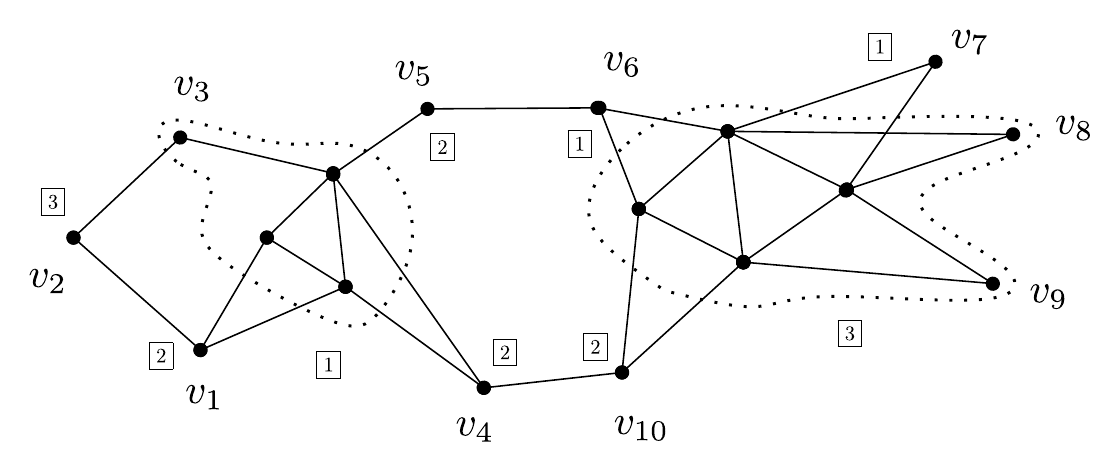}
\caption{A compatible coloring of an input graph.  
One of the non-trivial properties of such a coloring is that any color class (denoted by dotted lines) 
contains at most one big witness set.
\label{fig:fpt-cactus-1}}
\end{figure}

\paragraph*{Outline of the algorithm}
Our algorithm builds upon ideas presented in \cite{tree-contraction} but requires a more complex structural analysis of the input graph.
%We first introduce the essential terminology for presenting the outline of our algorithm.
We first argue that $(G, k)$ is a \yesinst\ instance of \textsc{Cactus Contraction}
if and only if
%We prove that determining whether an instance $(G, k)$ of \textsc{Cactus Contraction}
%is a \yesinst\ instance is equivalent to determining whether 
there is a partition of 
$V(G)$ (called a \emph{$T$-witness structure} of $G$ for some cactus $T$) 
that satisfies the following properties.
\begin{itemize}
\item Every part, called a \emph{witness set},  induces a connected subgraph of $G$.
Moreover, the number of edges in a spanning forest of these connected components is at most $k$.
\item Contracting all edges in a spanning forest of witness sets results in a cactus
(in particular, cactus $T$).
\end{itemize}
Given this equivalence, the task is to identify all \emph{big witness sets} (witness sets with at least two vertices) in a cactus witness structure of $G$.

As in \cite{tree-contraction},  we first present a randomized algorithm for \textsc{Cactus Contraction} on $2$-connected graphs, which consists of the following three steps.
%the algorithm for \textsc{Cactus Contraction} on general graphs consists of a randomized algorithm for \textsc{Cactus Contraction} on $2$-connected graphs as a subroutine and a derandomization procedure using universal sets.
%The randomized algorithm for \textsc{Cactus Contraction} on $2$-connected graphs takes as input a $2$-connected graph $G$ and an integer $k$ and returns \yesalgo\ or \noalgo. 
%The subroutine consists of the following three steps.
\begin{enumerate}
\item \textit{The algorithm constructs a coloring $f: V(G) \rightarrow \{1, 2, 3\}$ by 
assigning one of the colors to each vertex independently and uniformly at random.}
\end{enumerate}
We argue that if we can contract at most $k$ edges in $G$ to convert it into a cactus, 
then random coloring $f$ is `compatible' (see Definition~\ref{def:compatible})
with a fixed (but hypothetical) $T$-witness structure (see Definition~\ref{def:graph-contractioon})
of $G$ with a high probability
for some cactus $T$.
We say a connected component of $G[f^{-1}(i)]$ in \emph{non-trivial} if it contains at least two vertices.
Such a compatible coloring ensures that any non-trivial connected component of $G[f^{-1}(i)]$ contains 
at most one big witness set for every $i \in \{1, 2, 3\}$.
Figure~\ref{fig:fpt-cactus-1} shows a (compatible) coloring of a given graph.
We formalize this step in Section~\ref{sec:coloring}.

\begin{figure}[t]
\centering
\includegraphics[scale=0.65]{./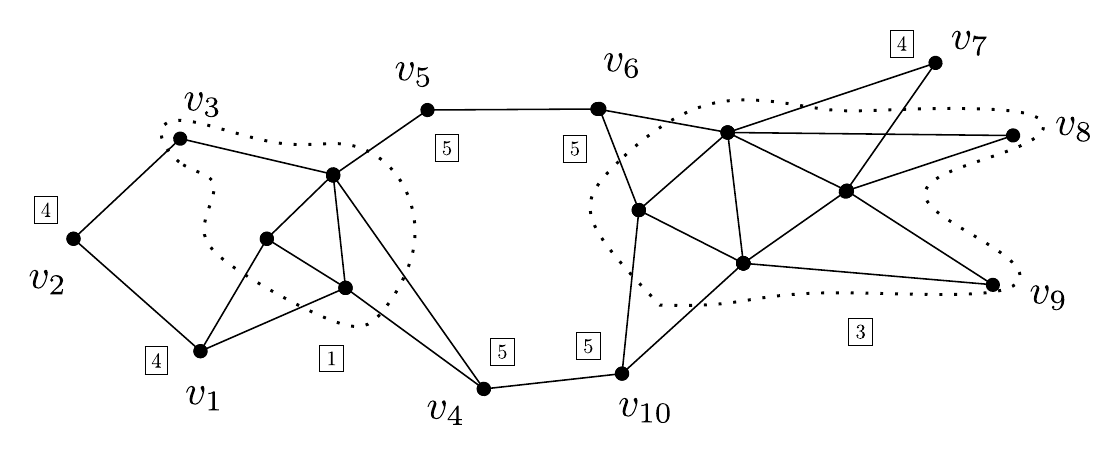}
\caption{Recoloring of the input graph in the second step of the algorithm.\label{fig:fpt-cactus-2}}
\end{figure}

\begin{enumerate}[resume]
\item {\it It refines coloring $f$ by recoloring some vertices using colors in 
$\{4, 5\}$ to ensure that any non-trivial connected component of $G[f^{-1}(i)]$ contains a big witness set.} 
\end{enumerate}
%For recoloring, the algorithm identifies some vertices that are not part of 
%any big witness set in any witness structure.
The algorithm recolors the vertices using the following two criteria.
First,  it recolors vertices that are only adjacent to vertices of different colors. 
For example,  vertices $v_1, v_2$ are recolored to $4$.
Second,  it recolors vertices on a path between 
two non-trivial monochromatic connected components.
For example,  vertices $v_4, v_5, v_6, v_{10}$ in Figure~\ref{fig:fpt-cactus-2}.
We prove in Lemma~\ref{lemma:recoloring-path-one-witness} and 
Lemma~\ref{lemma:recoloring-path-two-witnesses}, respectively that this recoloring is safe. 
in the sense
that there still remains a desirable $T$-witness structure of $G$ whose 
`big witness sets' (see Definition~\ref{def:big-witness-set}) are monochromatic.
We formalize this step in Section~\ref{sec:refinement-coloring}.

%Vertices of paths that do not intersect with any big witness set in any witness structure and are {adjacent} to only one big witness set (Lemma~\ref{lemma:recoloring-path-one-witness}) are colored using the color $4$.
%(Ex. vertices $v_1, v_2$ in Figure~\ref{fig:fpt-cactus}.) 
%Vertices that are not a part of any big witness set and lie on a path between two big witness sets (Lemma~\ref{lemma:recoloring-path-two-witnesses}) are recolored to color $5$.
%(Ex. vertices $v_4, v_5, v_6, v_{10}$ in Figure~\ref{fig:fpt-cactus-3}.)

\begin{enumerate}[resume]
\item {\it Finally,  it extracts a big witness set from the components of $G[f^{-1}(i)]$ for $i \in \{1, 2, 3\}$.}
\end{enumerate}
For a component $C$ of $G[f^{-1}(i)]$, the algorithm finds a subset of vertices 
that is `as good as' the big witness set in it by computing a \emph{connected core} (Definition~\ref{def:conn-core}).
Figure~\ref{fig:fpt-cactus-3} highlights the connected cores of color class with dotted lines.
We formalize this step in Section~\ref{sec:cactus-extract-big-witness}.
\begin{figure}[t]
\centering
\includegraphics[scale=0.65]{./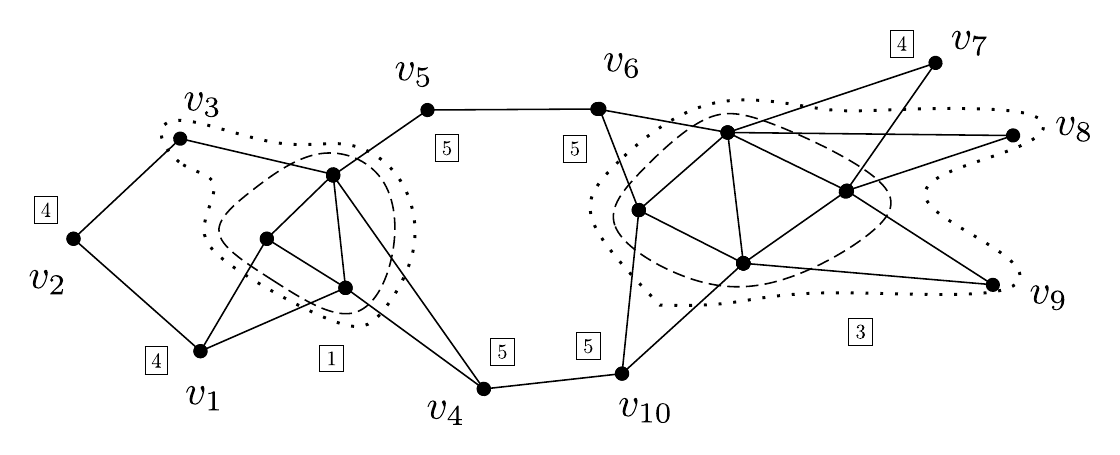}
\caption{Extracting a big witness set (denoted by dashed boundaries) from
a color class (denoted by dotted boundaries). \label{fig:fpt-cactus-3}}
\end{figure}

Now,  the algorithm considers the partition of $V(G)$ that contains connected cores
and singleton sets for the remaining vertices. 
Let $F$ be the collection of edges of the spanning tree of these connected cores.
If $|F| \le k$ and $G/F$ is a cactus, then the algorithm returns \yesalgo; otherwise, it returns \noalgo.
This completes the overview of the algorithm for $2$-connected graphs.

We present a randomized algorithm for general graphs that uses the above algorithm as a subroutine 
in Section~\ref{sec:cactus-algo-together}.
It also contains a way to its derandomization.
%to use the above algorithm to obtain an algorithm for general graphs.
%We prove that if $(G, k)$ is a \yesinst\ instance, the algorithm returns \yesalgo\ with constant probability.
%If $(G, k)$ is a \noinst\ instance, then the algorithm always returns \noalgo.
We remark that the primary goal of this work is to provide a single 
exponential-time algorithm for {\sc Cactus Contraction},  and
for the sake of simplicity, we do not optimize the running time. 

%\paragraph{Organization of the paper}
%Preliminary notation and results are presented in Section~\ref{sec:prelims}.
%The randomized algorithm for \textsc{Cactus Contraction} on $2$-connected graphs is described in Sections~\ref{sec:coloring}, \ref{sec:refinement-coloring} and \ref{sec:cactus-extract-big-witness}.
%The randomized algorithm for \textsc{Cactus Contraction} on general graphs and its derandomization using $(n,k)$-universal sets are presented in Section~\ref{sec:cactus-algo-together}.

\section{Preliminaries}
\label{sec:prelims}

For an integer $q$, we denote the set $\{1, 2, \dots, q\}$ by $[q]$. 

\subsection{Graph-theoretic Terminology}
All graphs considered in this paper are undirected, finite, and simple. 
For a graph $G$, $V(G)$ and $E(G)$ denote the sets of its vertices and edges, respectively. 
Two vertices $u, v$ are {\em adjacent} in $G$ if there is an edge $uv$ in $E(G)$. 
The {\em neighborhood} of a vertex $v$, denoted by $N_G(v)$, 
is the vertices adjacent to $v$ in $G$. 
The {\em degree} $d_G(v)$ of the vertex $v$ is $|N_G(v)|$. 
We extend the notion of the neighborhood for subsets $S$ as 
$N_G(S)=\bigcup_{v \in S} N_G(v) \setminus S$. 
For a subset $F$ of $E(G)$, $V_G(F)$ denotes the set of endpoints of edges in $F$. 
For subsets $X, Y$ of $V(G)$, $E_G(X, Y)$ denotes the set of edges with one endpoint in $X$
and the other endpoint in $Y$. 
Similarly, $E_G(X)$ denotes the set of edges with both endpoints in $X$. 
We omit the subscript in the notation for these sets if the graph under consideration is apparent. 
For a graph $G$ and an integer $q$, we say that the function $g: V(G) \rightarrow [q]$, is a \emph{proper coloring} of $G$ if for any edge $uv \in E(G)$, $g(u) \neq g(v)$.

For a subset $S$ of $V(G)$, $G - S$ denotes the graph obtained by 
deleting $S$ from $G$ which is also known as
\emph{induced subgraph} on $V(G) - S$.
We use $G[S]$ to denote the subgraph of $G$ induced on the set $S$. 
A {\em path} $P = (v_1,\ldots, v_q)$ is a sequence of distinct vertices such that 
any pair of consecutive vertices are adjacent. 
\begin{definition}[Cable Path]
A path $P = (v_1, v_2, \ldots, v_q)$ in $G$ is a \emph{cable path} 
if for each $ 2 \leq i \leq q - 1$,$N_G(v_i) = \{v_{i-1}, v_{i+1} \}$,  
i.e. ,  $v_i$ is adjacent with 
only two vertices in $G$ which are also on the path.
\end{definition}
The vertex set of $P$, denoted by $V(P)$, is the set $\{v_1,\ldots, v_q\}$. 
A {\em cycle} $C = (v_1,\ldots, v_q)$ is a sequence of distinct vertices
that is a path where $v_1$ and $v_q$ are adjacent. 

A graph is {\em connected} if there is a path between every pair 
of its vertices and it is {\em disconnected} otherwise.  
A subset $S$ of $V(G)$ is a \emph{connected set of vertices} if $G[S]$ is connected. 
A {\em component} of $G$ is a maximal connected subgraph of $G$. 
If the number of connected components of $G-\{v\}$ is more than that of $G$, then we call $v$ as a {\em cut-vertex}.
A connected graph with no cut-vertex is called a {\em 2-connected} graph. 
A \emph{block} is a  maximal subgraph that is $2$-connected. 
A block is a maximal induced $2$-connected subgraph, edge, or isolated vertex. 
Also, two distinct blocks intersect in at most one vertex. 
A vertex in at least two blocks is a \emph{cut-vertex}. 

A \emph{cactus} is a connected graph in which every edge is in at most one cycle. 
A cactus is a connected graph in which every block is either an edge or a cycle.
The following properties of cacti follow imminently from their definition. 

\begin{observation} \label{obs:cactus-prop}
The following statements hold for a cactus $T$.
\begin{enumerate}[noitemsep,nolistsep]
\item \label{item:cactus-coloring} The vertices of $T$ can be properly colored using at most three colors.
\item \label{item:high-deg} Every vertex of degree at least three in $T$ is a cut-vertex.
\item \label{item:subdivide} If $T’$ is the graph obtained from a cactus $T$ by subdividing an edge then $T’$ is a cactus. 
\end{enumerate} 
\end{observation}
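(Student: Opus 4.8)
The plan is to prove the three items separately, each reading off the block structure recalled just above: a cactus is connected, and each of its blocks is either a single edge or a cycle, with distinct blocks sharing at most one (cut-)vertex.

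For item~\ref{item:cactus-coloring} I would induct on the number of blocks, processing them in the order given by a BFS of the block-cut tree rooted at an arbitrary block. Fix any vertex $r$ of the root block and set its colour to $1$. When a block $B$ is about to be processed, exactly one of its vertices — call it $c$, the vertex on the path in the block-cut tree towards the root ($c = r$ for the root block) — is already coloured, say with colour $a$; all other vertices of $B$ are still uncoloured. If $B$ is an edge $cw$, give $w$ any colour $\ne a$. If $B$ is a cycle $(c = v_1, v_2, \dots, v_q)$ with $q \ge 3$, choose $b \in \{1,2,3\}\setminus\{a\}$, set $v_i := a$ for odd $i$ and $v_i := b$ for even $i$; this is a proper colouring of all edges except possibly $v_qv_1$, and that edge is monochromatic only when $q$ is odd, in which case recolouring $v_q$ with the remaining colour $c' \in \{1,2,3\}\setminus\{a,b\}$ fixes it (legal since $q\ge 3$ forces $v_{q-1}\ne v_1$). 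Since no previously assigned colour is ever changed, the process ends with a proper $3$-colouring of $T$.

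For item~\ref{item:high-deg} the observation is that an edge-block contributes exactly $1$ to the degree of each of its two endpoints, and a cycle-block contributes exactly $2$ to the degree of each of its vertices; hence a vertex lying in just one block has degree at most $2$. Contrapositively, a vertex of degree at least $3$ lies in at least two blocks and is therefore a cut-vertex.

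For item~\ref{item:subdivide} let $T'$ arise from $T$ by subdividing $e=uv$ with a new vertex $w$, so $E(T') = (E(T)\setminus\{e\}) \cup \{uw,wv\}$ and $T'$ is clearly connected. Since $w$ has degree $2$, a cycle of $T'$ that uses $w$ must use both $uw$ and $wv$; replacing the pair $\{uw,wv\}$ by $e$ turns it into a cycle of $T$ through $e$, and this correspondence is a bijection; cycles of $T'$ avoiding $w$ are precisely the cycles of $T$ avoiding $e$. Under this bijection, for every edge $g \ne e$ the cycles of $T'$ containing $g$ match the cycles of $T$ containing $g$, so $g$ lies in at most one cycle of $T'$; and each of $uw, wv$ lies exactly in the cycles of $T'$ corresponding to cycles of $T$ through $e$, of which there is at most one. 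Hence every edge of $T'$ lies in at most one cycle, so $T'$ is a cactus.

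I expect item~\ref{item:cactus-coloring} to be the only step needing any care, specifically the parity bookkeeping on an odd cycle whose entry vertex is pre-coloured; items~\ref{item:high-deg} and~\ref{item:subdivide} are immediate once one argues in terms of the block decomposition and the degree-$2$ status of the subdivision vertex, respectively.
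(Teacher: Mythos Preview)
Your proof is correct. The paper itself does not prove this observation at all: it simply states that these properties ``follow imminently from their definition,'' so there is no approach to compare against. Your arguments for items~\ref{item:high-deg} and~\ref{item:subdivide} are exactly the kind of one-line block-decomposition and cycle-bijection reasoning one would expect. For item~\ref{item:cactus-coloring} your block-by-block extension works; a slightly shorter alternative is to note that every cactus has a vertex of degree at most two (take any non-cut vertex in a leaf block of the block-cut tree), so cacti are $2$-degenerate and hence greedily $3$-colourable --- but your argument is equally valid and arguably more explicit.
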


\subsection{Graph Contraction and Related Terms}
\label{sub-sec:graph-contraction}

The {\em contraction} of the edge $e=uv$ in $G$ deletes vertices 
$u$ and $v$ from $G$, and adds a new vertex adjacent to vertices 
that were adjacent to either $u$ or $v$. 
This process does not introduce self-loops or parallel edges and 
$ G/e$ denotes the resulting graph. 
That is, $V(G/e) = (V(G) \cup \{w\}) \backslash \{u, v\}$ and 
$E(G/e) = \{xy \mid x,y \in V(G) \setminus \{u, v\}, xy \in E(G)\} 
\cup \{wx \mid x \in N_G(u) \cup N_G(v)\}$ where 
$w$ is a new vertex not in $V(G)$.

\begin{definition}[Graph Contraction] \label{def:graph-contractioon} 
A graph $G$ is said to be \emph{contractible} to graph $H$ if there is a surjective function 
$\psi: V(G) \rightarrow V(H)$ such that
\begin{itemize}
\item[-] for any vertex $h \in V(H)$,  the set $W(h) := \{v \in V(G) \mid \psi(v)= h\}$ is connected in $G$, and
\item[-] any two vertices $h, h' \in V(H)$ are adjacent in $H$ if and only if there is an edge in $G$ with one endpoint in $W(h)$ and the other in $W(h’)$.
\end{itemize}
We say that $G$ is contractible to $H$ via mapping $\psi$ and define $H$-\emph{witness structure} $\calW$ of $G$ as $\mathcal{W} := \{W(h) \mid h \in V(H)\}$.
\end{definition}
Observe that $\mathcal{W}$ is a partition of vertices in $G$. 
We can obtain $H$ from $G$ by sequentially contracting edges in a spanning forest, say $F$,  of big witness sets in $H$-witness structure $\calW$ of $G$.
For a vertex $h$ in $H$,  $W(h)$ is a \emph{witness set} associated with or corresponding to $h$.  

\begin{definition}
\label{def:big-witness-set}
A witness set $W(h)$ in $H$-witness structure of $G$ is said to be \emph{big witness set} if it contains at least two vertices.
\end{definition}

Note that there are at most $|F|$ big witness sets, and the number of vertices in a big witness set is upper bounded by $|F| + 1$. 
We say $G$ is \emph{$k$-contractible} to $H$; equivalently,  $H$ can be obtained from $G$ by at most $k$ edge contractions if the cardinality of $F$ is at most $k$.

For a given subset of edges $F \subseteq E(G)$, 
we consider the partition,  say $\calW$,  of $V(G)$ 
that contains vertices in each connected component  of $G[V(F)]$
as a separate part
and singleton set for vertices in $V(G) \setminus V(F)$.
Given this partition, we can construct graph $H$ by adding a vertex for each part and an edge in $H$ if these two parts are adjacent in $G$.
We denote this graph $H$ by $G/F$.
Consider a set of edges $F' \subseteq F$ such that $F'$ is a collection of edges in a spanning forest 
of $G[V(F)]$.
Note that the graphs constructed above with respect to $F$ and $F'$ are identical.
Hence, it is safe to assume that edges in $F$ constitute a forest.
Finally, we mention the following useful observation to end this section.

\begin{observation}
\label{obs:cactus-witness-cut-vertex}
Consider a $2$-connected graph $G$ and let $F$ be a 
set of edges in $G$ such that $G/F$ is a cactus.
If $t$ is a cut vertex in $G/F$, the witness set $W(t)$ in 
the $G/F$-witness structure contains at least two vertices.
\end{observation}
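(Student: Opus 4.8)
The plan is to prove the contrapositive: I assume that the witness set $W(t)$ is a singleton, say $W(t) = \{v\}$ for some $v \in V(G)$, and derive that $t$ is not a cut vertex of $G/F$. First I would observe that since $W(t) = \{v\}$ is a single vertex and contractions never create multi-edges or loops, the neighbourhood of $t$ in $G/F$ is exactly the set of parts $W(t')$ of the witness structure that contain a neighbour of $v$ in $G$. In particular, $d_{G/F}(t) \le d_G(v)$.

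The heart of the argument is to show $G/F - \{t\}$ is connected, equivalently that deleting $v$ and then contracting the edges of $F$ still leaves the quotient connected. Since $G$ is $2$-connected, $G - \{v\}$ is connected. I would argue that the graph obtained from $G - \{v\}$ by contracting (the relevant part of) $F$ is precisely $G/F - \{t\}$: because $v$ forms its own witness set, every witness set of the $G/F$-witness structure other than $W(t)$ is entirely contained in $V(G) \setminus \{v\}$, and the edges of $F$ all lie inside these witness sets (recall $F$ may be taken to be a spanning forest of the big witness sets, and $W(t)$ is not big). Hence contracting $F$ in $G - \{v\}$ yields exactly the graph $G/F$ with the vertex $t$ removed. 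Since contracting edges in a connected graph keeps it connected, and $G - \{v\}$ is connected, $G/F - \{t\}$ is connected. Therefore $t$ is not a cut vertex of $G/F$, contradicting the hypothesis; so $W(t)$ must contain at least two vertices.

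The one point requiring a little care — and the step I expect to be the main (minor) obstacle — is justifying cleanly that "contract $F$ in $G$, then delete $t$" equals "delete $v$ in $G$, then contract $F$." This needs the explicit observation that no edge of $F$ is incident to $v$ (true since $\{v\} = W(t)$ is a singleton witness set and $F$ can be assumed to be a union of spanning forests of the big witness sets), together with the fact that the witness structure of $G/F$ restricted to $V(G)\setminus\{v\}$ is exactly the witness structure of $(G-\{v\})/F$. Once that identity is in place, the connectivity conclusion is immediate from $2$-connectivity of $G$, and the observation follows.

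<br>

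I should double-check whether I even need $G/F$ to be a cactus in this argument. Scanning my plan: I only used that $G$ is $2$-connected and that $W(t)$ is a singleton; the cactus hypothesis seems unnecessary for this particular observation. That is fine — the hypothesis is stated for context (the observation is applied in the cactus setting) but the proof does not exploit it. I would keep the proof as stated and simply not invoke the cactus property.
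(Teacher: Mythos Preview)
Your proposal is correct and follows essentially the same approach as the paper's proof. The paper assumes $W(t)=\{u\}$ with $t$ a cut vertex, pulls back the components of $(G/F)-t$ to vertex sets $V_1,V_2$ in $G$ with no edge between them, and concludes $u$ is a cut vertex of $G$; you argue the contrapositive by pushing the connectivity of $G-\{v\}$ forward through the contraction to conclude $(G/F)-t$ is connected---these are two sides of the same argument, and your observation that the cactus hypothesis is unused is also correct (the paper's proof does not invoke it either).
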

\begin{proof} 
For the sake of contradiction, assume that $t$ is a cut-vertex in $T = G/F$ and $W(t)$ is a singleton set in $T$-witness structure.
Let $W(t) = \{u\}$.
We argue that $u$ is a cut-vertex in $G$.
Let $T_1$ and $T_2$ be any two connected components obtained by removing $t$ from $T$.
Consider the set $V_1$, the collection of vertices present in witness sets corresponding to vertices in $T_1$. 
Formally, $V_1 = \{u |\ u \in W(t_1) \text{ for some } t_1 \in V(T_1)\}$.
Similarly, $V_2 = \{u |\ u \in W(t_2) \text{ for some } t_2 \in V(T_2)\}$.
Since $T_1, T_2$ are non-empty, so are $V_1, V_2$. 
Further,  there is no edge between $T_1, T_2$ in $T$.
Moreover, since $T$ is obtained from  $G$ by contracting edges, there is no edge between $V_1, V_2$ in $G$.
This implies that $G - v$ has at least two connected components viz $V_1, V_2$.
This contradicts the fact that $G$ is a $2$-connected graph. 
Hence for every cut vertex $t$ in $T$, the associated witness set $W(t)$ contains at least two vertices.
\end{proof}

\subsection{Parameterized Complexity}
An instance of a parameterized problem comprises an input $I$ 
of the classical instance of the problem and an integer $k$, 
which is called the {\em parameter}. 
A problem $\Pi$ is said to be \emph{fixed-parameter tractable} 
or in the complexity class \FPT\ if given an instance $(I,k)$ of $\Pi$, 
we can decide whether or not $(I,k)$ is a \yesinst\ instance of $\Pi$ 
in $f(k)\cdot |I|^{\OO(1)}$ time. 
Here, $f(\cdot)$ is some computable function that depends only on $k$. 
We say that two instances, $(I, k)$ and $(I’, k’)$, of a parameterized problem 
$\Pi$ are \emph{equivalent} if $(I, k)$ is a \yesinst\ instance of $\Pi$ 
if and only if $(I’, k’)$ is a \yesinst\ instance of $\Pi$. 
A \emph{reduction rule}, for a parameterized problem $\Pi$ is an algorithm 
that takes an instance $(I, k)$ of $\Pi$ as input and outputs an instance $(I’, k’)$
of $\Pi$ in time polynomial in $|I|$ and $k$.
If $(I, k)$ and $(I’, k’)$ are equivalent instances then we say that 
the reduction rule is \emph{safe} or \emph{correct}.
For more details on parameterized complexity, we refer to the books by Cygan et al.~\cite{cygan2015parameterized} and Fomin et al.~\cite{fomin2019kernelization}.

\section{Compatible Colorings}
\label{sec:coloring}

In this section,  we introduce the notion of a \emph{compatible coloring} of vertices of $G$
with respect to a $T$-witness structure of $G$.
Informally, a compatible coloring $f_c$ assigns the same color to every vertex in a big witness set, allowing us to define the color of a witness set. 
Further, if two big witness sets are adjacent, 
then $f_c$ colors them differently.  
Finally,  if there is a cable path between two big witness sets with 
internal vertices corresponding to singleton witness sets, then $f_c$ colors 
the endpoints of this path with a color different from the color of their neighbors in the path. 
See Figure~\ref{fig:compatible-coloring} for an illustration.
We then prove that if the witness structure corresponds to a cactus,  
the input graph $G$ admits a compatible coloring (Lemma~\ref{lemma:3-col-existence}). 
Then, we show that if $G$ is a $2$-connected graph that is $k$-contractible to a cactus $T$,  
a random coloring of its vertices is a coloring compatible with a fixed $T$-witness structure 
with high probability (Lemma~\ref{lemma:random-coloring-compatible}).

\begin{definition}[Compatible Coloring] \label{def:compatible} 
Consider a graph $G$ and its $T$-witness structure $\calW$.
A coloring $f_c: V(G) \mapsto \{1, 2, 3\}$ is a {\em compatible coloring} with respect to $\mathcal{W}$ 
if the following hold.
  \begin{enumerate}
  \item \label{item:mono-big-witness} For any $W(t) \in \mathcal{W}$,  every vertex in $W(t)$ has same color.
  Hence,  $f_c(W(t))$ is well-defined.
  \item \label{item:adj-big-witness} For any edge $t_xt_y \in E(T)$,  if $W(t_x), W(t_y)$ are big witness sets 
  then $f_c(W(t_x)) \neq f_c(W(t_y))$.
  \item \label{item:path-big-witness} For any cable path $P = (t_x, t_1, t_2, \dots, t_q, t_y)$ in $T$ 
  such that $W(t_x),  W(t_y)$ are big witness sets and 
  for all $1 \le i \le q$,  $W(t_i)$ is a singleton witness set 
  then $f_c(W(t_x)) \neq f_c(W(t_1))$ and $f_c(W(t_y)) \neq f_c(W(t_q))$. 
  \end{enumerate}
\end{definition}

A subset $X$ of $V(G)$ is called a \emph{monochromatic component} 
of $f_c$, 
if $X$ is an inclusion-wise maximal connected set of vertices that 
have the same color in $f_c$. 
Let $\calX$ be the set of all monochromatic components of $f$. 
Note that for any monochromatic  component $X$ in $\calX$, 
either all vertices of $X$ are in singleton witness sets in $\calW$ 
or $X$ contains exactly one big witness set in $\calW$. 

Figure~\ref{fig:compatible-coloring} presents an illustrative example. 
The big witness sets $W(t_x), W(t_y), W(t_z)$ are monochromatic. 
The vertices $t_x$ and $t_y$ are adjacent and $W(t_x)$ and $W(t_y)$ 
are big witness sets and hence have different colors. 
Consider the path between $W(t_x)$ and $W(t_z)$ whose 
internal vertices correspond to singleton witness sets. 
The coloring of vertices in this path satisfies the definition's third property. Finally, consider the path which starts and ends 
in $W(t_x)$ and all internal vertices corresponding to singleton witness sets. 
The definition of compatible coloring allows all vertices in this path 
to have the same color as $W(t_x)$. 

\begin{figure}
\centering 
\includegraphics[scale=0.7]{./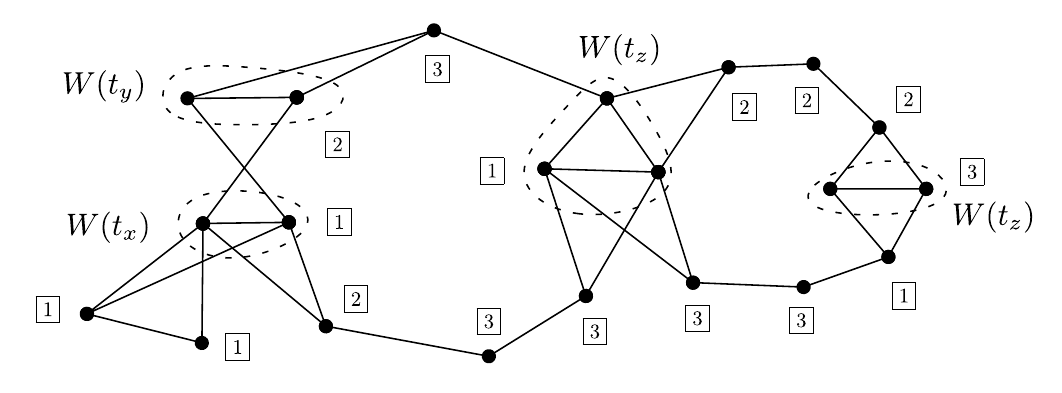}
\caption{A compatible coloring of the input graph. 
Dotted boundaries denote big witness sets.}
\label{fig:compatible-coloring}
\end{figure}

Note that we use compatible coloring using 
only three colors in this example. 
In the following lemma, we argue three colors 
are sufficient to get such a coloring if the resultant graph, i.e., 
$G/F = T$ is a cactus. 

\begin{lemma} \label{lemma:3-col-existence}
Consider a graph $G$ and a set $F \subseteq E(G)$ such that 
$G/F$ is a cactus. Let $\calW$ be the $G/F$-witness structure of $G$. 
Then, there exists a coloring $f : V(G) \rightarrow \{1, 2, 3\}$ which is 
compatible with $\calW$.
\end{lemma}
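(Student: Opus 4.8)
The plan is to obtain $f$ by pulling back a proper $3$-colouring of the cactus $T = G/F$ along the contraction map, and then to check the three defining conditions one by one.

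\textbf{Setup.} First I would invoke item~\ref{item:cactus-coloring} of Observation~\ref{obs:cactus-prop} to fix a proper colouring $c : V(T) \to \{1,2,3\}$ of the cactus $T = G/F$. Let $\psi : V(G) \to V(T)$ be the surjection witnessing that $G$ is contractible to $T$, as in Definition~\ref{def:graph-contractioon}, so that $W(t) = \{v \in V(G) \mid \psi(v) = t\}$ for each $t \in V(T)$. Define $f : V(G) \to \{1,2,3\}$ by $f(v) = c(\psi(v))$; equivalently, $f$ gives colour $c(t)$ to every vertex of $W(t)$. Since $\calW$ is a partition of $V(G)$, $f$ is well defined.

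\textbf{Verification of the three conditions of Definition~\ref{def:compatible}.} Condition~\ref{item:mono-big-witness} is immediate, since all vertices of $W(t)$ receive colour $c(t)$, so $f(W(t))$ is well defined for every $W(t) \in \calW$ (in particular for the big ones). For Condition~\ref{item:adj-big-witness}: if $t_x t_y \in E(T)$, then $c(t_x) \neq c(t_y)$ because $c$ is proper, hence $f(W(t_x)) \neq f(W(t_y))$; this applies in particular when $W(t_x), W(t_y)$ are both big. For Condition~\ref{item:path-big-witness}: let $P = (t_x, t_1, \dots, t_q, t_y)$ be a cable path in $T$ with $q \ge 1$. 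Consecutive vertices of a path are adjacent, so $t_x t_1 \in E(T)$ and $t_q t_y \in E(T)$ (when $q = 1$ these two edges meet at the single internal vertex $t_1 = t_q$), and properness of $c$ gives $c(t_x) \neq c(t_1)$ and $c(t_q) \neq c(t_y)$. Translating back through $\psi$ yields $f(W(t_x)) \neq f(W(t_1))$ and $f(W(t_y)) \neq f(W(t_q))$, as required. Hence $f$ is compatible with $\calW$.

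\textbf{Where the difficulty (if any) lies.} There is no real obstacle here: the only use of the cactus hypothesis is to supply the $3$-colourability of $T$ in the first step, and each of the three requirements of Definition~\ref{def:compatible} reduces to the single fact that the pulled-back colouring inherits properness of $c$ along the edges of $T$. The only point worth a moment's care is the degenerate cable paths (very short $P$, or $P$ whose two big endpoints lie in the same witness set), but these are subsumed by the observation that consecutive vertices of $P$ are adjacent in $T$, so no case analysis is needed.
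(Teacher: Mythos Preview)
Your proof is correct and follows essentially the same approach as the paper: pull back a proper $3$-colouring of the cactus $T=G/F$ along the contraction map $\psi$ and observe that the three compatibility conditions all reduce to properness of that colouring along edges of $T$. The paper's own proof is identical in structure but terser, simply asserting that the pulled-back colouring ``is easy to verify'' as compatible, whereas you spell out each condition explicitly.
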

\begin{proof}
Let $G$ be contractible to $G/F$ via function $\psi$. 
That is, the  function $\psi: V(G) \rightarrow V(G/F)$ satisfies 
the properties mentioned in Definition~\ref{def:graph-contractioon}. 
As $G/F$ is a cactus, by Observation~\ref{obs:cactus-prop}~($\ref{item:cactus-coloring}$), 
there is a proper coloring  $g: V(G/F) \rightarrow \{1, 2, 3\}$ of $G/F$.
We define the coloring $f_c: V(G) \rightarrow \{1, 2, 3\}$ as: $f_c(v) = g(\psi(v))$ 
for every $v$ in $V(G)$.
It is easy to verify that the coloring $f_c$ satisfies all the properties 
mentioned in Definition~\ref{def:compatible}.
This concludes the proof of the lemma.
\end{proof}

Next, we argue that random coloring is a compatible coloring
with respect to a fixed witness structure $\calW$ with high probability. 
We use the following claim to bound the number of vertices in $G$
which are adjacent to big witness sets, are in singleton witness sets, 
and are in a path between two big witness sets.  

\begin{lemma}\label{lemma:nbd-cut-cactus} 
Consider a $2$-connected graph $G$ which is contractible to a 
cactus $T$. 
Let $T_B$ be the set of vertices in $T$ that correspond to 
big-witness sets. 
Then, there are at most $4 \cdot |T_B|$ vertices lie on a path between two different vertices in $T_B$ and are adjacent 
to vertices in $T_B$.
\end{lemma}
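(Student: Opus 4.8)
The plan is to work in the cactus $T$ itself and count the vertices of $T$ (i.e. witness sets) that lie on a path between two distinct vertices of $T_B$ and are adjacent to $T_B$. Call such a vertex a \emph{relevant} vertex; note every relevant vertex corresponds to a singleton witness set (otherwise it would itself be in $T_B$, and the claim would be about it differently — in any case, the quantity we must bound is the number of relevant vertices). First I would observe that, since $T$ is a cactus, its \emph{block-cut tree} is a tree whose leaves are blocks. I would root this structure and exploit the fact that every vertex of degree at least three in $T$ is a cut-vertex (Observation~\ref{obs:cactus-prop}, item~\ref{item:high-deg}), so the ``branching'' of $T$ happens only at cut-vertices and only at the level of blocks.

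Next I would set up the charging scheme. A relevant vertex $t$ lies on some path $P$ between two vertices of $T_B$ and is adjacent to a vertex of $T_B$; so on the path $P$, at least one of the two neighbours of $t$ is in $T_B$, or $t$ itself is an endpoint-neighbour. I would argue that each relevant vertex $t$ can be charged to a vertex $b \in T_B$ that is ``close'' to it in a controlled sense: either $b$ is adjacent to $t$, or $b$ is reached from $t$ by walking along a maximal cable path. The key structural claim to establish is that for each $b \in T_B$, at most a constant number — I expect exactly $4$ — of relevant vertices get charged to $b$. The intuition is that $b$ has some neighbours that are ``turning points'' of short segments of singleton vertices leading to $b$; because $T$ is a cactus, the portions of $T$ hanging off $b$ along degree-$2$ paths are very restricted, and a relevant singleton vertex next to $b$ on a path to another big set can only appear in a bounded number of ``directions'' around $b$. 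Concretely I would argue: the relevant vertices adjacent to $b$ that lie on a cable path emanating from $b$, and the relevant vertices that are the \emph{far} endpoint of a maximal cable path whose near end touches $b$, together number at most $4$, because $b$ can be the near endpoint of a maximal cable path in at most two of its incident blocks on each ``side'', and each such cable path contributes at most its two boundary singleton vertices.

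The bulk of the work — and the main obstacle — is making the charging well-defined and verifying the constant $4$. The subtlety is that a single relevant vertex might a priori be charged from several big sets, or a long chain of singleton vertices between two big sets could try to charge many relevant vertices to one endpoint; I must rule this out. The clean way is: for each maximal cable path $Q$ in $T$ both of whose endpoints are in $T_B$ (or one endpoint in $T_B$ and the cycle/structure constraints force the other), only the \emph{two} internal vertices of $Q$ adjacent to the endpoints are relevant-by-adjacency, and I charge one to each endpoint; interior vertices of $Q$ far from both ends are adjacent only to other singletons and hence are \emph{not} relevant, which is exactly where the hypothesis ``adjacent to $T_B$'' does the heavy lifting. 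For cable paths with only one endpoint in $T_B$, or none, no vertex is relevant. Counting how many maximal cable paths can have a given $b \in T_B$ as an endpoint, and noting each contributes at most one charge to $b$ per end, gives the factor; I would check that a cactus vertex $b$ is the endpoint of at most $4$ such maximal cable paths by a small case analysis on whether $b$ is a cut-vertex and on the blocks (edges/cycles) incident to it. Summing over $b \in T_B$ yields the bound $4 \cdot |T_B|$, completing the proof.
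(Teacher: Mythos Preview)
Your overall strategy—work in $T$, identify the ``relevant'' singleton vertices as exactly the near-end vertices of maximal cable paths between two $T_B$-vertices, and charge them to endpoints in $T_B$—is sound in spirit and different from the paper's approach. But the final counting step contains a genuine gap. You assert that a vertex $b\in T_B$ is the endpoint of at most four such maximal cable paths, to be verified ``by a small case analysis on whether $b$ is a cut-vertex and on the blocks incident to it.'' This is false: a cut vertex of a cactus can lie in arbitrarily many blocks. Concretely, take $T$ to consist of a central vertex $b$ with $m$ triangles $b\,x_i\,y_i\,b$ attached, where $b,x_i\in T_B$ and $y_i\notin T_B$. Then for each $i$ the path $b-y_i-x_i$ is a maximal cable path between two $T_B$-vertices with $b$ as an endpoint, so $b$ is the endpoint of $m$ such paths and receives $m$ charges under your scheme. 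The lemma still holds here ($m$ relevant vertices versus $|T_B|=m+1$), but your charging does not witness it.

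The paper avoids this difficulty by induction on the number of blocks of $T$ rather than by a per-vertex charging bound. The key structural fact, which your write-up does not invoke, is Observation~\ref{obs:cactus-witness-cut-vertex}: because $G$ is $2$-connected, every cut vertex of $T$ corresponds to a big witness set, hence lies in $T_B$. This lets one peel off a leaf block $D$ at its unique cut vertex $t$; since $t\in T_B$ and $t\notin V^{\circ}$, the sets $V^{\circ}$ and $T_B$ split cleanly between $D$ and the rest, with $t$ counted in both parts of $T_B$. In the base case (a single block, an edge or a cycle) each $T_B$-vertex has at most two neighbours, giving $|V^{\circ}|\le 2|T_B|\le 4(|T_B|-1)$ when $|T_B|\ge 2$; the induction then combines $4|T_B^1|$ and $4(|T_B^2|-1)$ with $|T_B^1|+|T_B^2|=|T_B|+1$ to get $4|T_B|$. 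Your charging could be repaired along these lines—essentially by charging within each block and using that cut vertices are shared—but as written the ``at most four'' claim is where the argument breaks.
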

\begin{proof} 
Let $F \subseteq E(G)$ denote a set of edges such that $G/F = T$. 
Then,  $T_B = \{t' \mid t' \in V(T)$ such that $W(t')$ is a big witness set
in $G/F$-witness structure of $G\}$. 
Let $V^{\circ}$ be the set of vertices $t$ in $V(T)$ such that 
$t$ is in a path between two different vertices,  say $t_x,  t_y$,  in $T_B$
and is adjacent with $t_x$ and $t_y$.
To prove the claim, we need to bound the cardinality of $V^{\circ}$.

Note that if $T_B$ contains at most one vertex, then by the definition
of $V^{\circ}$, we have $|V^{\circ}| = 0$ and so the statement is 
vacuously true. 
We consider the case when $|T_B| \ge 2$ and prove that 
$|V^{\circ}| \le 4 \cdot |T_B|$ using induction on the number 
of blocks in $T$. 
Consider the base case when $T$ has only one block. 
In this case, $T$ is an edge or a cycle.  
It is easy to see that every vertex in $T_B$ can contribute at most 
two vertices to $V^{\circ}$. 
Hence, $|V^{\circ}| \le 2 \cdot |T_B|$. 
As $|T_B| \ge 2$, we have $|V^{\circ}| \le 4 \cdot (|T_B| - 1)$. 
Our induction hypothesis is as follows: if the number of blocks in $T$
 is strictly less than $q \geq 1$ then $|V^{\circ}| \le 4 \cdot |T_B|$. 
 Suppose that $T$ that has $q$ blocks. 
 Let $\mathcal{D}$ be the block decomposition of $T$. 
 Consider a leaf in $\mathcal{D}$ and let $D$ be the corresponding block. 
Let $t$ be the unique cut vertex in $D$.
Consider the subgraphs $T_1, T_2$ of $T$ induced on 
$V(T) \setminus (D \setminus \{t\})$ and $D$, respectively.
It is easy to verify that $T_1, T_2$ are cactus graphs. 
Note that $T^1 \cap T^2 = \{t\}$ and $|T| = |T^1| + |T^2| - 1$.
Similarly, define $V^{\circ}_1 = V^{\circ} \cap V(T_1)$ and 
$V^{\circ}_2 = V^{\circ} \cap V(T_2)$.
By Observation~\ref{obs:cactus-witness-cut-vertex},  
$W(t)$ is a big witness set as $t$ is a cut-vertex in $T$.
This implies that  $t \not\in V^{\circ}$ and hence 
$V^{\circ}_1 \cap V^{\circ}_2 = \emptyset$ and 
$|V^{\circ}| = |V^{\circ}_1| + |V^{\circ}_2|$.

Define $T^1_B = T_B \setminus (D \setminus \{t\})$ and 
$T^2_B = T_B \cap D$.  Since $T_1$ has $q - 1$ blocks in its 
block decomposition, by the induction hypothesis, 
$|V^{\circ}_1| \le 4 \cdot |T^1_B|$.  
By the arguments in the base case,  
we have $|V^{\circ}_2| \le 4 \cdot (|T^2_B| - 1)$. 
Using the bounds on $|V^{\circ}|$ and $|T_B|$,  
we get $|V^{\circ}| \le 4 \cdot |T_B|$. 
This concludes the proof of the lemma.
\end{proof}

We are now in a position to argue that random coloring is 
compatible coloring with respect to a fixed witness structure 
$\calW$ with high probability.

\begin{lemma}\label{lemma:random-coloring-compatible} 
Consider a $2$-connected graph $G$ which is $k$-contractible 
to a cactus $T$.
Let $f: V(G) \rightarrow \{1, 2, 3\}$ be a random coloring where 
colors are chosen uniformly at random for each vertex.
Then, $f$ is compatible with the $T$-witness structure of $G$ 
with probability at least $1\large/{3^{6k}}$.
\end{lemma}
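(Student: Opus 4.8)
The plan is to bound the probability that a uniformly random coloring $f \colon V(G) \to \{1,2,3\}$ is compatible with a \emph{fixed} $T$-witness structure $\calW$ of $G$, where $F \subseteq E(G)$ with $|F| \le k$ is the contraction set and $T = G/F$. Let $T_B \subseteq V(T)$ be the set of vertices whose witness sets are big. Since $|F| \le k$, the forest $F$ has at most $k$ edges, so $|V_G(F)| \le 2k$, which forces $|T_B| \le k$ (each big witness set contains at least one edge of $F$) and also bounds the total number of vertices lying inside big witness sets by $2k$. I would identify exactly which events among the random choices need to be controlled, namely one event per condition in Definition~\ref{def:compatible}, and then apply a union-bound-style argument: if each of the finitely many ``critical'' vertices or vertex-pairs is colored correctly, then $f$ is compatible.

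First, for condition~\ref{item:mono-big-witness}, I would require that all vertices in each big witness set $W(t)$ receive the same color. Instead of conditioning on each vertex of $W(t)$ individually, fix an arbitrary spanning tree of $W(t)$ inside $F$; it suffices that along each edge of that spanning tree the two endpoints agree, but it is cleaner simply to say: fix one representative vertex per big witness set and demand that every other vertex of that witness set matches its representative's color. The number of such ``matching'' constraints is at most $\sum_{t \in T_B} (|W(t)|-1) = |V_G(F)| - |T_B| \le 2k - |T_B|$, and each holds with probability $1/3$ independently enough (the colors are mutually independent across vertices), giving a factor $(1/3)^{2k - |T_B|}$ on this block of events. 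For condition~\ref{item:adj-big-witness}, once all big witness sets are monochromatic, the colors of the (at most $|T_B| \le k$) big witness sets are themselves independent uniform in $\{1,2,3\}$, and for each edge $t_x t_y$ of $T$ with both endpoints in $T_B$ we need $f(W(t_x)) \neq f(W(t_y))$; since $T$ is a cactus, the number of such edges among $|T_B|$ vertices is $O(|T_B|)$, and each inequality holds with probability $2/3$ conditionally. For condition~\ref{item:path-big-witness}, by Lemma~\ref{lemma:nbd-cut-cactus} there are at most $4|T_B| \le 4k$ vertices that are simultaneously in singleton witness sets, adjacent to a big witness set, and on a cable path between two big witness sets; for each such path endpoint we need its color to differ from the adjacent big witness set's color, again probability $2/3$ conditionally, and there are $O(k)$ of these.

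Combining: the overall success probability is at least a product of $(1/3)^{2k}$ from the matching events and $(2/3)^{O(k)}$ from the inequality events; bounding $(2/3)^{O(k)} \ge (1/3)^{O(k)}$ and collecting all exponents into a single $O(k)$ yields a bound of the form $1/3^{ck}$ for a suitable constant $c$, and a careful accounting of the constants (using $|T_B| \le k$, at most $2k$ vertices in big witness sets, at most $4k$ path-endpoint vertices, and $O(|T_B|)$ cactus edges among $T_B$) gives $c = 6$, i.e.\ probability at least $1/3^{6k}$. I would write the counting so that the constants add up to at most $6k$ exponent on base $3$: roughly $2k$ for making big witness sets monochromatic, and the remaining adjacency and cable-path inequalities — bounded via $|T_B| \le k$ and Lemma~\ref{lemma:nbd-cut-cactus} — absorbed into the remaining $4k$ slack after converting $(2/3)$-probabilities to $(1/3)$-probabilities, or treated directly with their $(2/3)$ factors which only help.

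The main obstacle is making the independence rigorous: the three conditions are not independent events over the raw per-vertex color choices, so I cannot literally multiply $1/3$'s naively. The clean way around this is to expose the randomness in a structured order — first reveal, for each big witness set, the color of its representative (these are independent); the monochromaticity events for the remaining vertices of big witness sets are then independent of each other and of the representative colors; and finally the adjacency and cable-path inequalities are events depending only on the already-revealed representative colors and the revealed colors of the $O(k)$ path-endpoint vertices, so one conditions on the monochromaticity events having succeeded and multiplies in the conditional probabilities. The second delicate point is ensuring every vertex whose color matters is counted exactly once and that the identified finite set of constraints is genuinely \emph{sufficient} for compatibility — here the hypothesis that $G$ is $2$-connected is essential, because Observation~\ref{obs:cactus-witness-cut-vertex} guarantees cut-vertices of $T$ correspond to big witness sets, which is what lets Lemma~\ref{lemma:nbd-cut-cactus} bound the cable-path vertices by $4|T_B|$ rather than by something depending on $|V(G)|$.
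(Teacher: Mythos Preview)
Your plan is workable in principle but takes a harder road than the paper, and the promised constant $c=6$ is not actually secured by your accounting.

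The paper's proof is a two-line trick you are missing. It first identifies the set $S \subseteq V(G)$ consisting of all vertices lying in big witness sets (at most $2k$ of them) together with all vertices in singleton witness sets that are adjacent to a big witness set and lie on a cable path between two big witness sets (at most $4k$ of them, by Lemma~\ref{lemma:nbd-cut-cactus}); so $|S| \le 6k$. It then observes that whether $f$ is compatible with $\calW$ is determined entirely by $f|_S$. Now the key step: by Lemma~\ref{lemma:3-col-existence} there \emph{exists} some compatible $3$-coloring $f_c$, and the event ``$f$ agrees with $f_c$ on every vertex of $S$'' has probability exactly $(1/3)^{|S|} \ge (1/3)^{6k}$ by independence of the per-vertex choices. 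That event implies compatibility, and you are done --- no conditional-independence bookkeeping, no counting of edge constraints.

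Your route instead estimates the probability of each compatibility condition directly, which creates two real difficulties. First, conditions~\ref{item:mono-big-witness} and~\ref{item:adj-big-witness} concern the \emph{same} $\le 2k$ vertices, so after revealing representative colors the adjacency inequalities among big witness sets are not mutually independent (they share vertices), and a bare $(2/3)^{m_2}$ bound needs justification. Second, your tally ``$2k$ for monochromaticity plus $4k$ slack for everything else'' does not close: condition~\ref{item:adj-big-witness} contributes up to roughly $\tfrac{3}{2}|T_B|$ edge constraints on top of the $\le 4|T_B|$ cable-path vertices from Lemma~\ref{lemma:nbd-cut-cactus}, and after converting $(2/3)$-factors to $(1/3)$-factors the exponent exceeds $6k$. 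The fix is precisely to count \emph{vertices whose color matters} rather than constraints --- and once you do that, the cleanest way to cash it in is to invoke Lemma~\ref{lemma:3-col-existence} for a target and demand an exact match, which is the paper's argument.
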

\begin{proof}
Let set $S \subseteq V(G)$ be the collection of all vertices 
in $V(G)$ which are in big witness sets in $\calW$ or 
are adjacent to a big witness set and are in paths between 
two big witness sets.
Since $G$ is $k$-contractible to $T$,  
there are at most $k$ big witness sets.
This implies the total number of vertices in $S$ is at 
most $2k + 4k = 6k$ (by Lemma~\ref{lemma:nbd-cut-cactus}).
We can ensure $|S| = 6k$ by arbitrarily adding some extra vertices.
By the definition of compatible coloring (Definition~\ref{def:compatible}), 
to determine whether a random coloring $f_c$ is compatible with 
$\calW$ or not, 
we only need to check the color of vertices in $S$.

Let $f_c$ be a $3$-coloring of $G$ which is compatible with $\calW$.
By Lemma~\ref{lemma:3-col-existence}, such a coloring exists.
For a random coloring $f$ and a vertex $v$ in $S$, 
the probability that $f_c(v) = f(v)$ is $1/3$.
Since colors are chosen uniformly at random for each vertex while constructing $f$, 
the probability that $f$ and $f_c$ color $S$ identically is at least $1\large/3^{6k}$. 
Hence $f$ is compatible with $\calW$ with probability at least $1\large/{3^{6k}}$.
\end{proof}

\section{Coloring Refinement} 
\label{sec:refinement-coloring}

Suppose $G$ is a 2-connected graph that is $k$-contractible
to a cactus $T$ and $F \subseteq E(G)$ is a minimal collection 
of at most $k$ edges such that $G/F = T$. 
Let $\calW$ denote the $T$-witness structure of $G$ and 
$f$ be a compatible coloring of $G$ with respect to $\calW$. 
A cycle in $T$ is called a \emph{pendant cycle} if there is exactly
one vertex in the cycle corresponding to a big witness set.
We now describe a recoloring procedure that expands the range 
of $f$ from $\{1, 2, 3\}$ to $\{1, 2, 3, 4, 5\}$ by identifying \emph{some} 
vertices that do not belong to any big witness set in $\calW$ and 
recoloring them with color $4$ or $5$. 
Note that algorithm does this recoloring without the knowledge of $\calW$. 
Let $\calX$ be the set of all monochromatic components with respect to $f$.

\begin{recoloring} \label{recolor:path-one-witness}
For any monochromatic component $X$ in $\calX$, if $G - X$ 
contains a single vertex or a cable path in $G$ as one of its 
components, then recolor all the vertices in that component with color $4$. 
\end{recoloring}

\begin{recoloring}\label{recolor:path-two-witnesses}
For any two monochromatic components $Y, Z$ in $\calX$, 
if $G - (Y \cup Z)$ contains a single vertex or a maximal cable 
path in $G$ as one of its components, then 
recolor all the vertices in that component with color $5$. 
\end{recoloring}

The algorithm exhaustively applies 
Recoloring~\ref{recolor:path-one-witness} before starting 
Recoloring~\ref{recolor:path-two-witnesses}. 
Also, once it recolors a vertex to $4$, it does not change its color. 
We argue that the algorithm recolors the vertices outside any big witness set in $\calW$ (the witness structure we are interested in). 
Then, we prove the following additional property of recoloring in 
Subsection~\ref{subsec:exact-one-big-witness}: any nontrivial monochromatic component of $f$ colored with colors $1, 2$, or $3$ 
contains exactly one big witness set of $\calW$.

Now, we present a brief rationale behind the recoloring rules. 
Consider a pendant cycle $C_T$ in $T$ such that $t$ is the unique 
cut vertex in $C_T$ and all vertices $C_T\setminus \{t\}$ correspond 
to singleton witness sets. 
Let $X$ be the monochromatic component 
which contains $W(t)$. 
Let $V_1$ be the collection of vertices in $G - X$ that are in singleton witness sets 
corresponding to vertices in $C_T$. 
It is easy to see that $V_1$ induces a cable path in $G$. 
Recoloring~\ref{recolor:path-one-witness} identifies such paths. 
Next, consider a cable path $P = (t_x, t_1, t_2, \dots, t_q, t_y)$ in $T$ 
such that $W(t_i)$ is a singleton witness set for all $1 \leq i \leq q$, 
and $W(t_x), W(t_y)$ are big witness sets. 
Let $X, Y$ be the monochromatic components containing 
$W(t_X), W(t_Y)$, respectively. 
Since $f$ is compatible with $\calW$,  we know that $t_1, t_q$ are not 
in $X$ and $Y$, respectively. 
Let $V_1$ be the vertices in $G$ which are contained in $W(t_i)$ 
for $1 \leq i \leq q$. It is easy to see that $V_1$ induces a maximal 
cable path in $G$. 
Moreover, $V_1$ is a component of $G - (X \cup Y)$. 
Recoloring~\ref{recolor:path-two-witnesses} identifies such paths. 

For example, in Figure~\ref{fig:fpt-cactus-2}, the path $v_1v_2$ is a 
component of $G - X$ where $X$ is a monochromatic component 
with color $1$. 
Recoloring~\ref{recolor:path-one-witness} changes the color of 
$v_1v_2$ to $4$. 
Similarly, $v_7$ is recolored by Recoloring~\ref{recolor:path-one-witness}. 
Note that the algorithm cannot identify $v_3$ or $v_8$ in this step. 
The path $v_5v_6$ is a maximal path when two monochromatic components are deleted from the graph. 
Recoloring~\ref{recolor:path-two-witnesses} 
changes the color of these two vertices to $5$.

\subsection{Properties of Recoloring~\ref{recolor:path-one-witness}}
\label{subsec:recoloring-simple-path}

Consider a monochromatic component $X \in \calX$. 
Let $P$ be a component of $G - X$ where $P$ is a cable path in $G$. 
In Lemma~\ref{lemma:recoloring-path-one-witness}, we argue that all 
vertices in $V(P)$ are singleton witness sets in $\calW$ or that 
we are dealing with a simple instance mentioned in Lemma~\ref{lemma:cycle_instance}.

\begin{figure}[t]
  \centering
  \includegraphics[scale=0.7]{./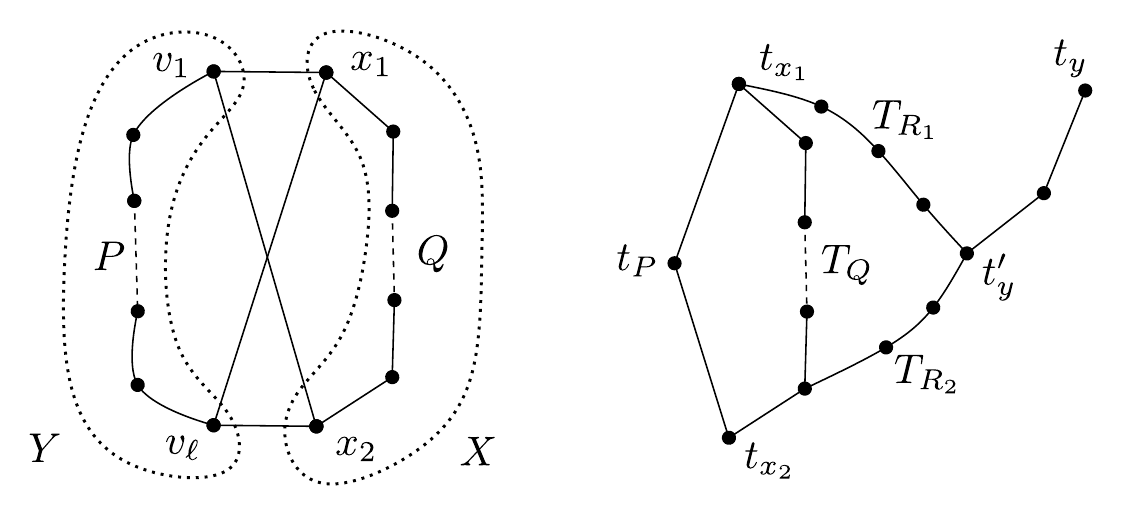}
  \caption{Illustration for Lemma~\ref{lemma:recoloring-path-one-witness}\label{fig:recoloring1}}
\end{figure}

\begin{lemma}\label{lemma:cycle_instance}
If $G$ is a $2$-connected graph such that $V(G)$ can be partitioned
into the vertex sets of two cable paths $P$ and $Q$ in $G$, then we can
solve the instance $(G,k)$ of \textsc{Cactus Contraction} in polynomial time.
\end{lemma}
\begin{proof}
Let $p_1,p_2$ and $q_1, q_2$ be the endpoints of the cable paths 
$P$ and $Q$, respectively. Observe that $G$ has a Hamiltonian cycle, 
as $G$ is $2$ connected and $p_1,p_2,q_1,q_2$ are the only vertices 
that may have degrees greater than two.
If $G$ is an induced cycle, then 
the optimal solution is the empty set. 
Otherwise, $G$ is a cycle with at most four additional edges between 
$p_1,p_2$ and $q_1,q_2$. 
Any optimal solution requires 
at most $3$ edge contractions and can be obtained in polynomial time.
\end{proof}

Subsequently, we assume that the instance under consideration does 
not satisfy the premise of Lemma~\ref{lemma:cycle_instance}. 

\begin{lemma}\label{lemma:recoloring-path-one-witness}  
For a monochromatic component $X$ in $\cal X$, let $P$ be a component
of $G-X$. 
If $P$ is a cable path in $G$, then all the vertices of $P$ lie in singleton witness sets.
\end{lemma}
\begin{proof}
For the sake of contradiction, assume that the lemma is false. 
Some big witness set in $\cal W$ contains a vertex of $P$.
Let $Y \in \calX$ be the monochromatic component that contains this witness set. 
As $f$ is a compatible coloring, and $N_G(P) \subseteq X$, we have $Y \subseteq V(P)$. 
Hence $Y$ is a cable path in $G$. 

We show that $Y = V(P)$. Let $P = (v_1, v_2, \ldots, v_\ell)$. 
Suppose $Y$ is a proper subset of $V(P)$. Then at least one of $v_1$ or 
$v_\ell$ is not in $Y$. 
Without loss of generality, let $v_1 \not\in Y$. 
Let $v_{i+1}$ be the smallest indexed vertex in $Y$. 
Let $t_i$ be the vertex in $T$ such that $v_i \in W(t_i)$. 
Observe that $W(t_i) \subseteq P$. 
There is no edge between $Y$ and $W(t_i)$ except for $v_i v_{i+1}$. 
Consider the witness structure $\cal W’$ obtained from $\cal W$ 
by replacing $Y$ with $Y \setminus \{v_{i+1}\}$ and $\{v_{i+1}\}$. 
Let $T’$ be the graph obtained from $G$ by contracting all witness 
sets in $\calW’$.
In other words, $\calW’$ is a $T’$-witness structure of $G$. 
Note that $T’$ can be obtained from $T$ by subdividing $t_i t_Y$, 
where $W(t_Y) = Y$. 
Hence, by  Observation~\ref{obs:cactus-prop},  $T’$ is a cactus. 
This contradicts the minimality of the solution $F$ associated 
with $\calW$. 
Therefore, $Y = V(P)$. 
  
All the edges of $P$ are in $F$, which implies that $P \in \calX$. 
Let $t_P$ be the vertex corresponding to $P$ in $T$ and $t_P$ 
is adjacent to a vertex $t \in T$ if and only if $W(t)$ contains a vertex 
from $N_G(P)$ which is a subset of $X$. 
We consider two cases depending on the number of edges across 
$V(P)$ and $X$. 
If $|E_G(V(P),X)| \leq 2$, then $|E_G(V(P),X)| = 2$ as $G$ is 2-connected, 
$P$ is a cable path in $G$ and $P$ is a component of $G-X$. 
Then, $t_P$ is adjacent with either one vertex, say $t_i$, or two vertices 
$t_i, t_j$ in $T$. 
By  deleting a vertex from $W(t_P)$ and adding 
it as a singleton witness set, we get a $T'$-witness structure 
$\mathcal{W}'$ where $T'$ is obtained from $T$ by subdividing 
the edge $t_it_P$. 
By Observation~\ref{obs:cactus-prop}, $T'$ is a cactus and 
this contradicts the minimality of $F$. 
Next, we consider the case when $|E_G(P,X)| \geq 3$. 
By the definition of a compatible coloring,  $X$ contains at most one 
big witness set. 
We consider two cases depending on whether $X$ contains a big 
witness set or not. 

\paragraph{Case~$1$: $X$ does not contain a big witness set}
Let $T_X$ denote the vertices in $T$ which correspond to singleton 
witness sets containing vertices in $X$. 
If $N_G(P)$ corresponds to at least $3$ vertices in $X$, then 
$T[T_X \cup \{t_P\}]$ contains two cycles with a common edge, 
i.e., $T$ is not a cactus leading to a contradiction.
Hence, $N_G(P)$ contains exactly two vertices, $x_1$ and $x_2$,
of $X$ and $E_G(P,X)$ contains either $3$ or $4$ edges. 
Refer to Figure~\ref{fig:recoloring1} for an illustration.

By Observation~\ref{obs:cactus-witness-cut-vertex}, 
no vertex of $T_X$ is a cut-vertex in $T$. 
As $X$ is a connected set, there is a path, say $Q$, 
between $x_1$ and $x_2$ which is in $X$. 
Let $T_Q$ denote the vertices in $T$, corresponding 
to singleton witness sets containing $Q$. 
Observe that $C = T[T_Q \cup \{t_P\}]$ is a cycle in $T$ with 
$t_P$ being the only vertex corresponding to a big witness set in $C$. 
We claim that there is no other vertex in $T$ apart from vertices $C$ 
i.e. $T = C$. 
If this is true, then $G = P \uplus Q$ and $P$ and $Q$ are cable paths in $G$. 
However,  this instance satisfies the premise of Lemma~\ref{lemma:cycle_instance}. 

We now argue that $T = C$. Assume this is not the case, 
then $V(T) \setminus (\{t_P\} \cup V(Q))$ is non-empty. 
We now show that there is a vertex $t_y \in V(T) \setminus (V(P) \cup V(Q))$, 
such that there are two internal vertex disjoint paths between 
$t_y$ and $t_P$ in $T$. 
Choose an arbitrary $t_y$ and consider a minimum separator between 
$t_y$ and $t_P$ in $T$. 
If the minimum separator is a single vertex $t’_y$, 
then observe that $y’ \notin V(Q)$, as vertices of $Q$ are not 
cut-vertices in $T$. 
We can substitute $t_y$ with $t’_y$ and start over. 
Since the shortest path between $t_y’$ and $t_P$ in $T$ 
is strictly shorter than the shortest path between $t_y$ and $t_P$, 
we obtain the required vertex $t_y$. 
Refer to Figure~\ref{fig:recoloring1} for an illustration. 
Let $T_{R_1}$ and $T_{R_2}$ be two internally vertex-disjoint paths 
in $T$ between $t_y$ and $t_P$. 
Without loss of generality, let $t_{x_1} \in T_{R_1}$ and $t_{x_2} \in T_{R_2}$, 
and hence $T_{R_1} \cup T_{R_2}$ contains a path between $t_{x_1}$ 
and $t_{x_2}$, say $T_R$, in $T$. 
This path is distinct from $T_Q$, as $Q_1 = V(T_Q) \cap V(T_{R_1})$ 
and $Q_2 = V(T_Q) \cap V(T_{R_2})$ are disjoint and therefore at 
least one edge of $T_Q$ is not in $T_R$. This implies that $T$ contains 
three distinct paths between $t_{x_1}$ and $t_{x_2}$, namely 
$P_T = (x_1, t_P, x_2), T_Q$ and $T_R$.  
This contradicts the fact that $T$ is a cactus. 
Hence, we conclude that $T = C$.
    
\paragraph{Case~2. $X$ contains a big witness set}
Let $Z$ be the big witness set in $X$ and 
$t_Z$ be the vertex in $T$ obtained by contracting $Z$. 
We claim that $N_G(P)$ is a subset of $Z$. 
If not, consider a vertex $v$ in $N_G(P) \setminus Z$. 
Note that $v$ is in a singleton witness set, say $W(t_v)$.
In $T$, $t_v$ lies on a path between $t_Z$ and $t_P$.  
Now $t_Z$, $t_P$ are big witness sets with $f(t_v) = f(t_Z)$ 
contradicting the fact that $f$ is a compatible coloring 
(Definition~\ref{def:compatible} (\ref{item:path-big-witness})). 
Hence $N_G(P) \subseteq Z$ which implies that $N_T(t_P) = \{t_Z\}$ in $T$. 
Then, $G/(F \setminus E(P))$ is also a cactus contradicting the minimality of $F$.
\end{proof}

Note that the above lemma holds when $P$ contain only one vertex. 
For a monochromatic component $X$ in $\calX$, let an isolated vertex 
$v$ be a component of $G-X$. 
Observe that $v$ has a color that is different from $X$. 
Since $f$ is compatible with an optimum solution, all big witness sets are monochromatic. 
This implies that $v$ cannot be in any big witness set and 
remains as a singleton witness set.

\subsection{Properties of Recoloring~\ref{recolor:path-two-witnesses}}
\label{subsec:recoloring-pendent-cycles}

Recall that in a cable path, no internal vertex is adjacent to any vertex 
outside this path. 
A cable path is \emph{maximal} if not contained in any other cable path. 
Since we are working with a $2$-connected graph, in a \emph{maximal cable path}
every internal vertex has the degree of exactly two, and the endpoints have the degree 
strictly greater than two. We state the following lemma when $P$ is a maximal 
cable path, but it also holds when $P$ is a single vertex. 
We use the maximality of $P$ to prove the second part of the lemma.

\begin{lemma} \label{lemma:recoloring-path-two-witnesses} 
For two monochromatic components $Y, Z$ in $\cal X$, let $P$ be a component 
of $G-(Y \cup Z)$. 
If $P=(v_1, v_2, \ldots, v_\ell)$ is a maximal cable path in $G$ with 
$N_G(v_1) \subseteq Y \cup \{v_2\}$ and $N_G(v_\ell) \subseteq Z \cup \{v_{\ell -1}\}$, 
then all vertices of $P$ lie in singleton witness sets. 
Further, both $Y$ and $Z$ contain big witness sets.
\end{lemma}
\begin{proof}
For the sake of contradiction, assume that
there is a big witness set that contains a vertex of $P$. 
Let $A \in \calX$ be a monochromatic component that contains 
this witness set. 
As $f$ is a compatible coloring and $N_G(P) \subseteq Y \cup Z$, 
we have $A \subseteq P$. 
This implies that $A$ is a cable path in $G$. 
Since $A$ contains a big witness set, $A$ itself is a big witness set. 
We now argue that $A = V(P)$. 
Suppose that $A$ is a proper subset of $V(P)$. 
Then at least one of $v_1, v_\ell$ is not in $Y$. 
Without loss of generality, let $v_1 \not\in A$. 
Let $v_{i+1}$ be the smallest indexed vertex in $A$ and 
let $t_i$ be the vertex in $T$ such that $v_i \in W(T_i)$. 
Observe that $W(t_i) \subseteq P$.  
There is only one edge, $v_iv_{i+1}$, between $A$ and $W(t_i)$. 
Consider the witness structure $\cal W’$ obtained from $\cal W$ 
by replacing $A$ with $A \setminus \{v_{i+1}\}$ and $\{v_{i+1}\}$. 
Let $T’$ be the graph obtained from $G$ by contracting all witness 
sets in $\calW’$. 
Note that $T’$ can be obtained from $T$ by sub-dividing $t_i t_Y$, 
where $W(t_Y) = Y$. 
Hence by Observation~\ref{obs:cactus-prop},  
this contradicts the minimality of the solution associated with $\calW$.

All the edges of $P$ are in $F$, which implies that $P \in \calX$. 
Let $t_P$ be the vertex corresponding to $P$ in $T$. 
Let $YP = Y \cap N(P) = Y \cap N(v_1)$ and 
$ZP = Z \cap N(P) = Z \cap N(v_\ell)$. 
First, we claim that $YP$ is in a witness set of $\cal W$. 
Suppose that $W(t_1), W(t_2)$ are in two different witness sets 
in $\cal W$ which contains vertices from $YP$. 
Refer to Figure~\ref{fig:recoloring2} for an illustration.

\begin{figure}[h]
  \centering
  \includegraphics[scale=0.6]{./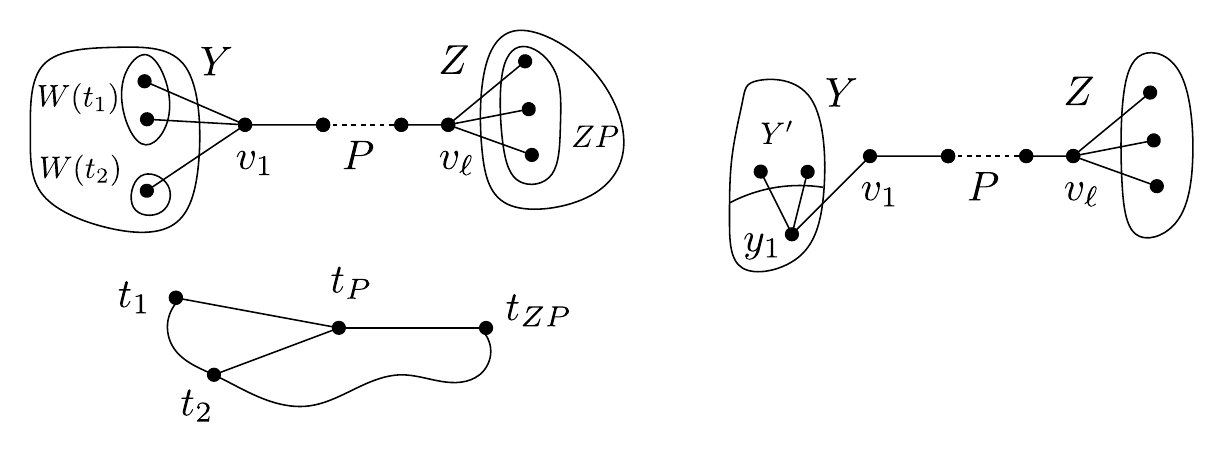}
  \caption{Illustration for Lemma~\ref{lemma:recoloring-path-two-witnesses}. \label{fig:recoloring2}}
\end{figure}

As $G$ is a $2$-connected graph and $P$ is a cable path in $G$, 
$G - V(P)$ is connected. 
As $G$ is a $2$-connected graph, there exists a path, say $P_1$, 
between $YP$ and $ZP$ which does not contain $v_1$.  
Since $v_1$ is in the cable path $P$, 
path $P_1$ does not contain 
any vertex in $V(P)$. 
This implies that there exists a path between any two vertices among
$\{t_1, t_2, t_{ZP}\}$ in $T - \{t_P\}$, where $t_{ZP}$ is a vertex in $T$ 
such that $W(t_{ZP}) \cap ZP \neq \emptyset$. 
Since $t_P$ is adjacent to $t_1, t_2$ and $t_{ZP}$, two cycles have a common edge in $T$. This contradicts the fact that $T$ is a cactus. 
Hence, all vertices of $YP$ are in one witness set in $\calW$. 
Similar arguments show that $ZP$ is in one witness set in $\cal W$. 

Consider vertices $t_{P}, t_{YP}, t_{ZP}$ in $T$ where and 
$YP \subseteq W(t_{YP})$ and $ZP \subseteq W(t_{ZP})$. 
Clearly $t_{YP}t_P$ and $t_Pt_{ZP}$ are edges in $T$, and 
$\deg_{T}(t_P) = 2$. 
Consider a witness structure $\calW’$ obtained from $\calW$ 
by removing $W(t_P)$ and adding two new sets $\{v_1\}$ and 
$W(t_P) \setminus \{v_{1}\}$. 
Let $T’$ be the graph from $G$ by 
contracting all witness sets in $\calW’$. 
In other words, $\calW’$ is a $T’$-witness structure of $G$ and 
$F \setminus \{v_{1}v_{2}\}$ contains edges of spanning trees of 
witness sets in $\calW’$.  
As $T’$ can be obtained from $T$ by subdividing the edge 
$t_{YP}t_{P}$, by Observation~\ref{obs:cactus-prop}, 
$T’$ is also a cactus. 
This contradicts the fact that $F$ is a minimal solution. 
Hence,  then all vertices of $P$ lie in singleton witness sets. 
  
Next, we argue that $Y$ contains a big witness set in $\calW$. 
If $v_1$ has at least two neighbors in $Y$, then the above arguments
imply that all these vertices are in a single witness set of $Y$ and 
therefore $Y$ contains a big witness set. 
Otherwise, $v_1$ only has one neighbor, say $y_1$ in $Y$. 
This contradicts the maximality of $P$ in $G$. 
Hence $v_1$ is adjacent with at least two vertices in $Y$ 
which are in one big witness set. 
Similarly, we can conclude that $Z$ contains a big witness set. 
This concludes the proof of the lemma.
\end{proof}

\subsection{Additional Properties of Recoloring}
\label{subsec:exact-one-big-witness}

By the definition of compatible coloring, 
every monochromatic component contains at most one big witness set.
In Lemma~\ref{lemma:correct-recoloring2}, we argue that after recoloring, 
each monochromatic component that contains at least two vertices 
and are colored with colors in $\{1, 2, 3\}$ contains a big witness set. 
We show that \emph{all} vertices in a monochromatic component 
which does not contain a big witness set satisfy the premise of 
Lemma~\ref{lemma:recoloring-path-one-witness} or 
Lemma~\ref{lemma:recoloring-path-two-witnesses}.

\begin{lemma}\label{lemma:correct-recoloring2}
If a monochromatic component $X$ in $\calX$ which contains 
at least two vertices are colored with a color in $\{1,2,3\}$ after 
an exhaustive application of Recoloring~\ref{recolor:path-one-witness} 
and Recoloring~\ref{recolor:path-two-witnesses}, 
then $X$ contains a big witness set.
\end{lemma}
\begin{proof}
Let $T_B$ and $T_S$ be subsets of $V(T)$ which correspond 
to big witness sets and singleton witness sets, respectively.
By Observations~\ref{obs:cactus-prop}~(\ref{item:high-deg}) 
and \ref{obs:cactus-witness-cut-vertex}, any vertex in $T_S$ has 
degree at most two in $T$. 
Hence $T - T_B$ is a collection of isolated vertices and cable paths.

Let $X$ be a monochromatic component in $\calX$ which has not 
been recolored.
If for some $t$ in $T_B$, $W(t)$ is in $X$, then the lemma is true.
Assume that $X$ does not contain any vertex from $T_B$.
Let $T_X$ be the set of vertices in $T$ such that the corresponding 
witness sets contain vertices in $X$.

Any vertex in $T_S$, and hence in $T_X$, is a leaf or in the path
starting and ending at the same vertex in $T_B$ (in a pendant cycle)
or in the path connecting two different vertices in $T_B$.
Consider a vertex $t’$ in $T_X$ and $t_1, t_2$ in $T_B$. 
Let $x’ = W(t’)$ and $X’, X_1, X_2$ be the monochromatic components
containing $W(t’), W(t_1), W(t_2)$, respectively.
If $t’$ is a leaf adjacent to $t_1$ then $X’$ is a component of $G - X_1$
and hence it was recolored to $4$. 
If $t’$ is in a path starting and ending at $t_1$ then $x’$ is a in
a cable path in $G - X_1$ and hence it was recolored to $4$.
Similarly, if $t’$ is a in a path connected $t_1, t_2$ then $x’$ is 
a in a cable path in $G - (X_1 \cup X_2)$ and was recolored to $5$.  
\end{proof}

This also implies that an exhaustive application of the recoloring rules 
identifies almost all the vertices in $G$ that form singleton witness sets in $T$;
the only exceptions being vertices in some monochromatic component $X$ in 
$\calX$, which also contains a big witness set. 
In the next section, we see how to identify those singleton witness sets.

\section{Identifying Big Witness Sets} 
\label{sec:cactus-extract-big-witness}
Suppose $G$ is a 2-connected graph that is $k$-contractible 
to a cactus $T$ and $F \subseteq E(G)$ is a minimal collection of 
at most $k$ edges such that $G/F = T$. 
Let $\calW$ denote the $T$-witness structure of $G$ 
and $f$ be a compatible coloring of $G$ with respect to $\calW$ 
that has been modified by the recoloring procedure described in 
Section~\ref{sec:refinement-coloring}. 
Let $\calX$ be the set of all monochromatic components with 
respect to $f$ that are colored with $1, 2$ or $3$. 

Let $W(t_X)$ be the big witness set in $\calW$ which is 
contained in $X \in \calX$ where $|X| \geq 2$. 
The existence of $W(t_X)$ is guaranteed by Lemma~\ref{lemma:correct-recoloring2}, and its uniqueness follows the definition of compatible colorings. 
Let $\hat{X}$ denote the superset of $X$ that contains all the vertices 
in the components of $G - X$ that are either isolated vertices or 
cable paths in $G$ whose both endpoints have neighbors in $X$. 
We describe a procedure that identifies $W(t_X)$ or a set $Z_X$ 
of vertices in $G[\hat{X}]$ which is \emph{at least as good as} $W(t_X)$, 
i.e., replacing the edges in the spanning tree of $G[W(t_X)]$ with the edges
in the spanning tree of $G[Z_X]$ in $F$ leads to another set $F’$ 
such that $|F'| \leq |F|$ and $G/F'$ is a cactus. 
Then, we refine $\calX$ by deleting $X$ and adding $Z_X$ and 
$\{v\}$ for every vertex $v$ in ${\hat X} \setminus Z_X$ and 
invoke the replacement procedure with the modified $\calX$. 
This process terminates when no monochromatic component 
is replaced in the current $\calX$.

Towards this,  we first observe some properties of the 
big witness set $W(t_X)$ in $X$.
Note that since contracting it results in a cactus graph,
the vertices that are in $X \setminus W(t_X)$ 
can only be an isolated vertices or cable paths. 
Apart from this,  because of the adjacencies with
neighboring colored component,  some vertices are 
forced into this big witness set.
This motivates the following definition.
\begin{definition}[Core] \label{def:conn-core}
A \emph{core} of a graph $H$ is a subset $Z$ of $V(H)$ 
such that every  component of $H - Z$ is an isolated vertex 
or a cable path whose neighborhood is in $Z$. 
If a core $Z$ is a connected set in $H$, then 
we call it a \emph{connected core} of $H$.
\end{definition}

Notice that any superset of a connected core that 
induces a connected subgraph is also a connected core. 
The following observation is a direct consequence of the definition.

\begin{observation}\label{obs:core-to-cactus} 
For a graph $H$ and its connected core $Z$, 
let $S$ be the edges of some spanning tree of $H[Z]$.
Then, $H/S$ is a cactus.
\end{observation}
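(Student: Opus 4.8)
\textbf{Proof plan for Observation~\ref{obs:core-to-cactus}.}
The plan is to show that $H/S$ is obtained from a cactus by subdividing edges and then appeal to Observation~\ref{obs:cactus-prop}~(\ref{item:subdivide}). First I would fix a connected core $Z$ of $H$ and a spanning tree of $H[Z]$ with edge set $S$; contracting $S$ collapses $Z$ into a single vertex, call it $z$, in $H/S$. The vertices of $H/S$ other than $z$ are exactly the vertices of $H - Z$, and since $Z$ is a core, each component of $H - Z$ is an isolated vertex or a cable path whose neighborhood lies entirely in $Z$. Thus in $H/S$, each such component becomes a pendant structure attached to $z$: an isolated vertex $v$ of $H-Z$ with $N_H(v)\subseteq Z$ becomes a leaf adjacent only to $z$, and a cable path $P = (v_1,\dots,v_q)$ with $N_H(v_1),N_H(v_q)\subseteq Z$ (and all internal vertices of degree $2$ within $P$) becomes, after contracting $S$, a path from $z$ through $v_1,\dots,v_q$ back to $z$ — that is, a cycle through $z$ — or, if only one endpoint of $P$ has a neighbor in $Z$, a pendant path hanging off $z$.

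Given this description, the key step is to identify the ``base'' cactus: let $H'$ be the graph on vertex set $\{z\}$ together with one representative internal vertex per cable-path component (or simply the multigraph where each cable path component of $H-Z$ contributes a single edge or loop-avoiding double edge at $z$, and each isolated-vertex component contributes a pendant edge). Concretely, I would take the graph $H^\star$ obtained from $H/S$ by replacing each pendant path by a single pendant edge and each cycle-through-$z$ of length $>2$ by a triangle (or, to stay within simple graphs, by a cycle of length exactly $3$) on $z$ and two fresh vertices. Every block of $H^\star$ is incident to $z$ and is either an edge or a short cycle, so $H^\star$ is a cactus by the block characterization of cacti stated in the preliminaries. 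Then $H/S$ is obtained from $H^\star$ by repeatedly subdividing edges (lengthening pendant edges into pendant paths, lengthening each triangle edge to reconstruct the original cycle), and each subdivision preserves cactihood by Observation~\ref{obs:cactus-prop}~(\ref{item:subdivide}). Hence $H/S$ is a cactus.

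The main obstacle I anticipate is bookkeeping around the ``every edge is in at most one cycle'' condition rather than any genuine difficulty: I must make sure that two distinct cable-path components of $H-Z$ whose endpoints both land in $Z$ create two \emph{edge-disjoint} cycles through $z$ (which is fine — a cactus allows many cycles sharing a single cut vertex), and that no single cable path is attached to $Z$ in a way that produces a chord. The latter cannot happen: the internal vertices of a cable path have degree exactly $2$ in $H$, so after contraction they retain degree $2$, and the only new adjacencies are at $z$; thus each component of $H-Z$ contributes at most one cycle and that cycle meets the rest of $H/S$ only at $z$. A second minor point is handling the degenerate cases (a cable path of length $0$, i.e. a single vertex with both its neighbors in $Z$, giving a double edge that we must treat as a single edge since $H/S$ is simple; and $Z = V(H)$, where $H/S$ is a single vertex, trivially a cactus). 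I would dispatch these explicitly at the start. With those cases set aside, the subdivision argument goes through cleanly.
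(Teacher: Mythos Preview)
Your plan is correct. The paper itself gives no proof of this observation, simply calling it ``a direct consequence of the definition,'' so there is no detailed argument to compare against; your write-up is considerably more explicit than anything in the paper.

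That said, your route through an auxiliary graph $H^\star$ and repeated subdivision is more machinery than the statement needs. Once you have observed that contracting $S$ collapses $Z$ to a single vertex $z$ and that every other vertex of $H/S$ lies in some component of $H-Z$, you can finish in one line using the block characterization of cacti already stated in the preliminaries: removing $z$ from $H/S$ leaves exactly the components of $H-Z$, so every block of $H/S$ contains $z$ and is either a single edge (coming from an isolated vertex or from a cable path with only one endpoint adjacent to $Z$, which contributes a chain of bridge edges) or a cycle (coming from a cable path whose two endpoints are both adjacent to $Z$). Since every block is an edge or a cycle, $H/S$ is a cactus. This avoids building $H^\star$, avoids the bookkeeping about fresh vertices and degenerate path lengths, and makes the ``two cable paths only share $z$'' point automatic.

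Your handling of the edge cases is fine; the only cosmetic remark is that a single-vertex component of $H-Z$ never produces a double edge in $H/S$ because contraction yields a simple graph by definition, so that paragraph can be shortened.
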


The following lemma shows that $W(t_X)$ is a connected core of $G[\hat X]$. 

\begin{lemma}\label{lemma:witness-conn-core} 
$W(t_X)$ is a connected core of $G[\hat{X}]$. 
\end{lemma}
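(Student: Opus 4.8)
The plan is to verify the two defining conditions of a connected core (Definition~\ref{def:conn-core}) directly for the set $W(t_X)$ inside the graph $G[\hat X]$. First, connectedness: $W(t_X)$ is a witness set in a witness structure, hence $G[W(t_X)]$ is connected by Definition~\ref{def:graph-contractioon}, and connectedness is inherited by the induced subgraph $G[\hat X]$ since $W(t_X)\subseteq X\subseteq\hat X$. So it remains to show that every component of $G[\hat X] - W(t_X)$ is an isolated vertex or a cable path whose neighbourhood (in $G[\hat X]$) lies inside $W(t_X)$.

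The key step is to understand the components of $G[\hat X] - W(t_X)$. I would split $\hat X\setminus W(t_X)$ into the part coming from $X$ itself, namely $X\setminus W(t_X)$, and the part coming from the attached vertices, namely $\hat X\setminus X$. For the first part: since $X$ contains exactly one big witness set $W(t_X)$ (uniqueness from the definition of compatible colourings, existence from Lemma~\ref{lemma:correct-recoloring2}), every vertex of $X\setminus W(t_X)$ is a singleton witness set. In $T=G/F$, these correspond to vertices of $T_S$, which by Observation~\ref{obs:cactus-prop}~(\ref{item:high-deg}) and Observation~\ref{obs:cactus-witness-cut-vertex} have degree at most two; moreover, because $W(t_X)$ contracts to a single vertex $t_X$ of $T$, the singleton witness sets inside $X$ form, inside $T$, only leaves hanging off $t_X$ or internally-disjoint paths/pendant cycles through $t_X$. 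Translating back to $G$ via the witness structure, each component of $G[X]-W(t_X)$ is therefore an isolated vertex or a cable path in $G$ with both endpoints adjacent only to $W(t_X)$ (using that $X$ is a monochromatic component, so these vertices have no neighbours in other colour classes that would give extra adjacencies violating the cable-path condition). For the second part, $\hat X\setminus X$: by construction of $\hat X$, these are exactly the vertices lying in components of $G-X$ that are isolated vertices or cable paths both of whose endpoints have neighbours in $X$; I must argue these attach only to $W(t_X)$ and not to $X\setminus W(t_X)$. This follows from compatibility of $f$ again: such an attached cable path connects two occurrences of colour classes, and if it attached to a singleton witness set of $X$ rather than to $W(t_X)$, one obtains (as in the proofs of Lemmas~\ref{lemma:recoloring-path-one-witness} and \ref{lemma:recoloring-path-two-witnesses}) either a violation of property~(\ref{item:path-big-witness}) of Definition~\ref{def:compatible} or an extra cycle in $T$ sharing an edge with another, contradicting that $T$ is a cactus. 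Hence the neighbourhood of each such component, taken inside $G[\hat X]$, lies in $W(t_X)$.

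Finally I would check the cable-path condition is the correct one in $G[\hat X]$ rather than in $G$: the vertices in $\hat X\setminus X$ are internal to cable paths of $G-X$, so in $G$ (and hence in $G[\hat X]$) they have degree exactly two with both neighbours on the path, while their path-endpoints have all their $G[\hat X]$-neighbours in $W(t_X)$ by the previous paragraph. Assembling these cases, every component of $G[\hat X]-W(t_X)$ is an isolated vertex or a cable path with neighbourhood contained in $W(t_X)$, so $W(t_X)$ is a core, and being connected it is a connected core of $G[\hat X]$.

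The main obstacle I expect is the second part: ruling out that an attached cable path (a vertex of $\hat X\setminus X$) is adjacent to a singleton witness set of $X$ rather than to $W(t_X)$, and ruling out that a component of $G[X]-W(t_X)$ picks up a ``forbidden'' chord once we pass to $G[\hat X]$ (which could turn a would-be cable path into something of higher degree). Both require a careful case analysis mirroring the arguments already made in Lemmas~\ref{lemma:recoloring-path-one-witness} and \ref{lemma:recoloring-path-two-witnesses}, invoking $2$-connectivity of $G$ and the cactus structure of $T$ to derive a contradiction (three internally-disjoint paths, or two cycles sharing an edge) whenever such an extra adjacency exists; I would also need to handle the degenerate case where the whole instance collapses to the one covered by Lemma~\ref{lemma:cycle_instance}.
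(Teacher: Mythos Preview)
Your plan is plausible but takes a substantially more laborious route than the paper, and has one structural wrinkle you do not fully resolve.

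The paper does not verify the core property directly. Instead it gives a short contradiction argument: suppose some component $C$ of $G[\hat X]-W(t_X)$ is neither an isolated vertex nor a cable path; then $C$ contains a vertex $y$ with $d_{G[\hat X]}(y)\ge 3$ having at least two neighbours inside $C$. Every vertex of $\hat X\setminus W(t_X)$ is a singleton witness set (for $\hat X\setminus X$ this is Lemma~\ref{lemma:recoloring-path-one-witness}; for $X\setminus W(t_X)$ it is immediate since $W(t_X)$ is the unique big witness set in $X$). Hence $t_y$ has degree at least $3$ in $T$, so by Observation~\ref{obs:cactus-prop}(\ref{item:high-deg}) it is a cut vertex, contradicting Observation~\ref{obs:cactus-witness-cut-vertex}. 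That is the whole proof.

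This sidesteps precisely the obstacles you anticipate: there is no need to show that the cable paths in $\hat X\setminus X$ attach only to $W(t_X)$, no need to rule out ``forbidden chords'', and no need to replay the case analysis of Lemmas~\ref{lemma:recoloring-path-one-witness} and~\ref{lemma:recoloring-path-two-witnesses} or to invoke Lemma~\ref{lemma:cycle_instance}. Your decomposition also has a gap you only partly acknowledge: a single component of $G[\hat X]-W(t_X)$ may contain vertices from \emph{both} $X\setminus W(t_X)$ and $\hat X\setminus X$ (an endpoint of an attached cable path can be adjacent to a singleton witness set of $X$), so analysing the two parts separately does not by itself give the component structure; you would still have to argue that gluing these pieces along such adjacencies never produces a vertex of degree~$\ge 3$. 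This is exactly what the paper's one-vertex contradiction handles in a single stroke.
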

\begin{proof} 
Since $W(t_X)$ is a witness set, 
by definition $G[W(t_X)]$ is connected. 
For the sake of contradiction, assume that $W(t_X)$ 
is not a core of $G[\hat{X}]$.
This implies that at least one component $C$ of 
$G[\hat{X}] \setminus W(t_X)$ is neither a cable path nor an isolated vertex.
Hence, $C$ contains at least $3$ vertices and there is a vertex $y$ in $C$
such that $d_{G[\hat{X}]}(y)\ge3$ and 
it is adjacent to at least two vertices in $C$.
If $y$ is in $\hat{X} \setminus X$, then by 
Lemma~\ref{lemma:recoloring-path-one-witness},  
it is in a singleton witness set. 
Otherwise, $y$ is in $X \setminus W(t_X)$ and once again it
is in a singleton witness set.
This implies that there is a vertex $t_y$ in $T$ such that 
$W(t_y) = \{y\}$ and $d_T(t_y)\ge3$.
By Observation~\ref{obs:cactus-prop}~(\ref{item:high-deg}), 
$y$ is a cut-vertex in $T$.
However, this contradicts 
Observation~\ref{obs:cactus-witness-cut-vertex} states 
that every cut-vertex in $T$ corresponds to a big witness set.
Hence,  $W(t_X)$ is a connected core of $G[\hat X]$.
\end{proof}

We point out that there may exist a proper subset of $W(t_X)$
which is a connected core of $G[\hat{X}]$. 
In other words, every vertex in $W(t_X)$ is either in a 
connected core of $G[X]$ and/or it is in $W(t_X)$ because of 
``external constraints''  as shown in Lemmas~\ref{lemma:marking1} 
and \ref{lemma:marking2}. 

\begin{lemma}\label{lemma:marking1} 
If there is a vertex $v$ in $N_G(X)$ such that $v$ is colored $5$, 
then $N_G(v) \cap X \subseteq W(t_X)$. 
\end{lemma}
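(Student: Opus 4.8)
\textbf{Proof plan for Lemma~\ref{lemma:marking1}.}
The plan is to argue by contradiction using the properties of Recoloring~\ref{recolor:path-two-witnesses} together with compatibility of $f$. Suppose some vertex $v \in N_G(X)$ is colored $5$, yet there is a vertex $u \in N_G(v) \cap X$ with $u \notin W(t_X)$; since $u \in X$ and $X$ contains $W(t_X)$ as its unique big witness set (Lemma~\ref{lemma:correct-recoloring2}), $u$ lies in a singleton witness set, say $W(t_u) = \{u\}$. First I would recall why $v$ received color $5$: by Recoloring~\ref{recolor:path-two-witnesses}, $v$ belongs to a component $P$ of $G - (Y \cup Z)$ for two monochromatic components $Y, Z \in \calX$, where $P$ is a single vertex or a maximal cable path with one endpoint adjacent into $Y$ and the other into $Z$. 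Since the recolorings are applied exhaustively and $v$ was colored $5$ before the current round, this structural fact about $v$ persists. By Lemma~\ref{lemma:recoloring-path-two-witnesses}, every vertex of $P$ (in particular $v$) lies in a singleton witness set, and both $Y$ and $Z$ contain big witness sets.

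Next I would locate $X$ among $Y, Z$. The vertex $v$ is adjacent to $u \in X$; since $u$ has a color in $\{1,2,3\}$ (it is in $X \in \calX$) and $v$ has color $5$, the neighbor $u$ of $v$ must lie in one of the two monochromatic components that were deleted to expose $P$ — otherwise $v$ would not be in a component of $G - (Y\cup Z)$, as $u$ would still be adjacent to $v$ after the deletion. Hence $X \in \{Y, Z\}$, say $X = Y$, and $u \in Y \cap N_G(P)$, i.e.\ $u \in YP$ in the notation of Lemma~\ref{lemma:recoloring-path-two-witnesses}. But the proof of Lemma~\ref{lemma:recoloring-path-two-witnesses} establishes that $YP$ is contained in a single witness set of $\calW$; since $Y = X$ contains only one big witness set $W(t_X)$ and $YP$ has at least one vertex that is a neighbor of the cable path (so $YP$ meets the "hub" witness set, which cannot be singleton by the cut-vertex argument of Observation~\ref{obs:cactus-witness-cut-vertex} applied to $t_P$ having degree $\ge 2$), that single witness set must be $W(t_X)$. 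Therefore $u \in YP \subseteq W(t_X)$, contradicting $u \notin W(t_X)$. This shows every neighbor of $v$ in $X$ lies in $W(t_X)$, i.e.\ $N_G(v) \cap X \subseteq W(t_X)$.

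The main obstacle I anticipate is making rigorous the claim "$X \in \{Y,Z\}$ and $N_G(v)\cap X \subseteq YP$ (or $ZP$)": one must be careful that $v$'s color-$5$ status refers to the recoloring round that produced the \emph{current} $\calX$, and that the components $Y, Z$ witnessing $v$'s recoloring are precisely monochromatic components containing big witness sets — which forces any color-$\{1,2,3\}$ neighbor of $v$ to lie inside $Y$ or $Z$ (a neighbor in a singleton monochromatic component would contradict either maximality of the cable path or the fact that $v$ is an internal/endpoint vertex of $P$ with prescribed neighborhoods $N_G(v_1)\subseteq Y\cup\{v_2\}$ etc.). Once that is pinned down, the rest is a direct appeal to the already-proved structural conclusion of Lemma~\ref{lemma:recoloring-path-two-witnesses} that $YP$ and $ZP$ each lie in one witness set, combined with uniqueness of the big witness set inside $X$. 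I would also double-check the degenerate case where $P$ is a single vertex, but Lemma~\ref{lemma:recoloring-path-two-witnesses} is explicitly stated to cover that, so no separate argument is needed.
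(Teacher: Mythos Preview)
Your plan takes a genuinely different route from the paper and has a real gap. The paper argues directly via compatibility: assuming some $x \in (N_G(v)\cap X)\setminus W(t_X)$, it takes a path $Q$ from $W(t_X)$ to $x$ inside $G[X]$, concatenates it with the edge $xv$, the path $P$, and a path inside the other component $X'$ reaching $W(t_{X'})$. Projected to $T$, this is a cable path from $t_X$ to $t_{X'}$ whose internal vertices are singleton witness sets, and the vertex of $T$ adjacent to $t_X$ (coming from $Q\subseteq X$) has the same color as $W(t_X)$ --- a direct violation of Definition~\ref{def:compatible}(\ref{item:path-big-witness}). No structural claim about $YP$ is needed.

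Your route instead tries to import the intermediate claim ``$YP$ lies in a single witness set'' from the proof of Lemma~\ref{lemma:recoloring-path-two-witnesses}. The difficulty is that in that proof the claim is established \emph{under the contradiction hypothesis} that $V(P)$ is a big witness set $W(t_P)$; the argument uses that $t_P$ is one vertex of $T$ adjacent to $t_1,t_2,t_{ZP}$. In your actual situation the first part of that lemma has already forced every vertex of $P$ into a singleton witness set, so there is no single $t_P$ --- you would have to redo the argument for $t_{v_1}$ via Observations~\ref{obs:cactus-prop}(\ref{item:high-deg}) and~\ref{obs:cactus-witness-cut-vertex}, not cite it (and your reference to ``$t_P$ having degree $\ge 2$'' is not the right invariant). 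More seriously, even granting that $YP$ sits in one witness set, your step ``that single witness set must be $W(t_X)$'' relies on $|YP|\ge 2$, which follows from maximality only when $P$ is a genuine maximal cable path; if $P$ is the single-vertex case of Recoloring~\ref{recolor:path-two-witnesses} you have not excluded $YP=\{u\}$ lying in a singleton witness set. The paper's compatibility argument sidesteps all of this, since it needs only the existence of $W(t_{X'})$ at the far end, not any containment statement about $YP$.
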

\begin{proof} If $v$ is colored $5$ then by 
Lemma~\ref{lemma:recoloring-path-two-witnesses}, 
$v$ is in a cable path $P$ in $G$ between $X$ and another component 
$X’$ in $\calX$ that contains a big witness set $W(t_{X'})$ in $\cal W$. 
Further, all the vertices of $P$ are in singleton witness sets in $\calW$. 
Assume that there exists a vertex $x$ in $W(t_X) \setminus N(v)$. 
Consider a path $Q$ from $W(t_X)$ to $x$ which is entirely in $G[X]$.
Let $Q’$ be a path from $W(t_{X'})$ to an endpoint of $P$ whose 
internal vertices are in $X’$. 
See Figure~\ref{fig:marking-scheme} for an illustration.

Since $xv$ is in $G$, we know that $Q$ along with 
the edge $xv$ and paths $P$, $P’$ form a path from $W(t_X)$ 
to $W(t_{X'})$ in $G$.
This path in $G$ gives a path between $t_X$ and $t_{X’}$ in $T$
such that all the internal vertices of this path correspond to 
singleton witness sets in $\calW$.
Notice that every vertex on the path from $W(t_X)$ to $x$ 
has same color as that of $W(t_X)$.
All these vertices are in singleton witness sets.
This contradicts the fact that $f$ is a compatible coloring 
with respect to $\calW$.
Hence,  all vertices in $N(v) \cap X$ are in $W(t_X)$.
This concludes the proof of the lemma.  
\end{proof}

\begin{figure}
  \centering
  \includegraphics[scale=0.7]{./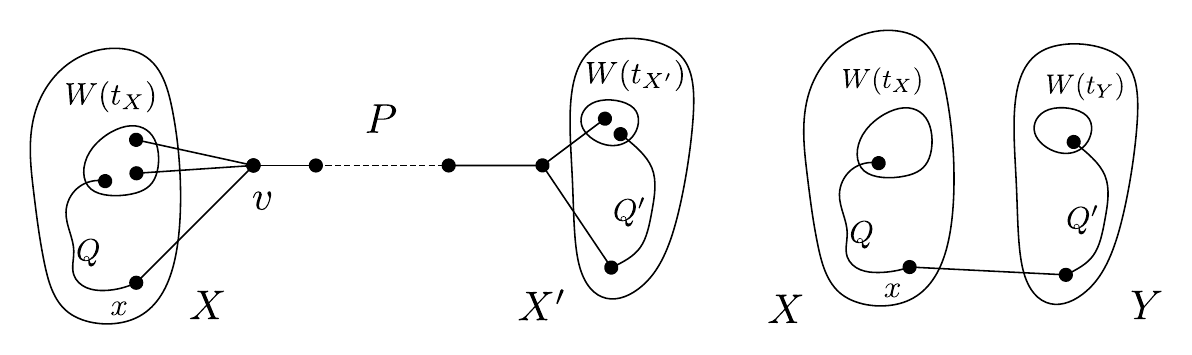}
  \caption{Illustration for Lemmas~\ref{lemma:marking1} and \ref{lemma:marking2}. 
  \label{fig:marking-scheme}}
\end{figure}

\begin{lemma}\label{lemma:marking2} 
For any monochromatic component $Y$ in $\calX$ 
which contains a big witness set $W(t_Y)$, 
$N(Y) \cap X \subseteq W(t_X)$. 
\end{lemma}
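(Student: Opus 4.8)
The plan is to follow the strategy of Lemma~\ref{lemma:marking1}, arguing by contradiction. I would first dispose of the degenerate case $Y = X$, where $N(Y) \cap X = \emptyset$ and the claim is vacuous; so assume $Y \neq X$ and hence $t_Y \neq t_X$. Now suppose some vertex $x \in (N(Y) \cap X) \setminus W(t_X)$ exists. Since $W(t_X)$ is big we have $|X| \ge 2$, and by Lemma~\ref{lemma:correct-recoloring2} combined with the uniqueness of the big witness set inside a monochromatic component, $W(t_X)$ is the only big witness set contained in $X$; consequently every vertex of $X \setminus W(t_X)$ forms a singleton witness set of $\calW$, in particular $\{x\}$ does.

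The core of the argument is to exhibit a path in $G$ from $W(t_X)$ to $W(t_Y)$ whose interior vertices all lie in singleton witness sets and whose first interior vertex belongs to $X$. Using that $G[X]$ is connected, I take a path in $G[X]$ from $W(t_X)$ to $x$ and keep its suffix $(w_0, w_1, \dots, w_p = x)$ starting at the last vertex $w_0 \in W(t_X)$ on it; then $w_1, \dots, w_p \in X \setminus W(t_X)$ are singletons with colour $f(W(t_X))$. Picking a neighbour $y \in Y$ of $x$ and, using that $G[Y]$ is connected, a shortest path $(y = u_0, \dots, u_r)$ in $G[Y]$ with $u_r \in W(t_Y)$ and $u_0, \dots, u_{r-1} \notin W(t_Y)$ (these are again singletons; $r=0$ is allowed), the concatenation $w_0, \dots, w_p, u_0, \dots, u_r$ is a walk in $G$ from $W(t_X)$ to $W(t_Y)$ through the edge $xy$. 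Projecting this walk through the contraction map $\psi$ yields a sequence $t_X, \psi(w_1), \dots, \psi(w_p), \psi(u_0), \dots, \psi(u_{r-1}), t_Y$ in $T$. I would check that this sequence is actually a path: the $\psi(w_i)$ are pairwise distinct singleton vertices (distinct vertices of $X$), the $\psi(u_j)$ are pairwise distinct singletons (distinct vertices of $Y$), the two groups are disjoint since $X$ and $Y$ are different monochromatic components, and $t_X, t_Y$ are big and hence different from all interior vertices and from each other; each consecutive pair is joined by an edge of $G$ between two distinct witness sets and so is an edge of $T$. Thus we obtain a path in $T$ from $t_X$ to $t_Y$ all of whose interior vertices are singleton witness sets.

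To finish, I observe that each interior vertex of this path, being a singleton witness set, cannot be a cut-vertex of $T$ (Observation~\ref{obs:cactus-witness-cut-vertex}) and therefore has degree at most two in the cactus $T$ (Observation~\ref{obs:cactus-prop}~(\ref{item:high-deg})); since it has two neighbours on the path, its entire $T$-neighbourhood consists of exactly those two path-neighbours. Hence the path is a \emph{cable path} between the two big witness sets $t_X$ and $t_Y$, and Definition~\ref{def:compatible}~(\ref{item:path-big-witness}) forces $f(W(t_X)) \neq f(W(\psi(w_1)))$ (with the degenerate reading $f(W(t_X)) \neq f(W(t_x))$ when $p = 1$ and $r = 0$). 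This contradicts the fact that $w_1 \in X$, a monochromatic component containing $W(t_X)$, so that $f(\psi(w_1)) = f(w_1) = f(W(t_X))$. Therefore no such $x$ exists, i.e. $N(Y) \cap X \subseteq W(t_X)$; see Figure~\ref{fig:marking-scheme}.

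I expect the only delicate point to be the verification that the $G$-walk projects to an honest cable path in $T$ --- namely that the interior $T$-vertices are pairwise distinct and distinct from $t_X$ and $t_Y$ (so the projection is a path and not merely a walk), and that, being singletons in a cactus, they have degree exactly two on it (so it is a cable path and not merely an induced path). Everything else --- the existence of the connecting paths inside $G[X]$ and $G[Y]$, and the final appeal to the third property of a compatible colouring --- is routine, mirroring Lemma~\ref{lemma:marking1} and Lemma~\ref{lemma:recoloring-path-one-witness}.
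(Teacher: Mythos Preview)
Your proposal is correct and follows essentially the same approach as the paper: argue by contradiction, build a path in $G$ from $W(t_X)$ through the offending singleton $x$ to $W(t_Y)$, project it to $T$, and invoke property~(\ref{item:path-big-witness}) of Definition~\ref{def:compatible}. The paper's version is terser --- it simply asserts the existence of such a path in $T$ whose neighbour of $t_X$ shares the colour of $W(t_X)$ --- whereas you additionally spell out why the projected sequence is a genuine cable path (distinctness of the singleton images, and the degree-two argument via Observations~\ref{obs:cactus-prop}~(\ref{item:high-deg}) and~\ref{obs:cactus-witness-cut-vertex}); this extra care is warranted but does not change the underlying strategy.
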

\begin{proof} 
If $E(X, Y)$ is empty, then the statement is vacuously true.
Assume that there is a vertex $x$ in 
$(N(Y) \setminus W(t_X)) \cap X$,  and let $t_X, t_Y$ be the vertices 
of $T$ corresponding to the big witness sets $W(t_X)$ and  $W(t_Y)$ respectively.
See Figure~\ref{fig:marking-scheme} for an illustration.
Since $X$ is connected, there is a path between $W(t_X)$ 
and $x$ which is entirely in $X$.
As $X$ may contain only one big witness set, 
$x$ lies in a singleton witness set in $\cal W$.
This implies that there is a path between $t_X$ 
and $t_Y$ in $T$ (via $x$) such that the neighbor of $t_X$ 
has the same color as the vertices in $W(t_X)$.
This contradicts the fact that $f$ is a compatible coloring 
with respect to $\calW$.
\end{proof}

Next, we show that a minimum connected core with specific properties
is a replacement for $W(t_X)$. Towards this, we describe 
the following marking scheme.

\begin{marking-scheme}\label{mark} 
$(i)$ If there is a vertex $y$ in $N(X)$ such that $ f(y) = 5$ then mark all the vertices in $N(y) \cap X$.
$(ii)$ For each monochromatic component $Y$ in $\calX$ such that $f(Y) \in \{1, 2, 3\}$ and $|Y| \geq 2$, mark all vertices in $N(Y) \cap X$.
\end{marking-scheme}

\begin{figure}[h!]
  \centering
  \includegraphics[scale=0.65]{./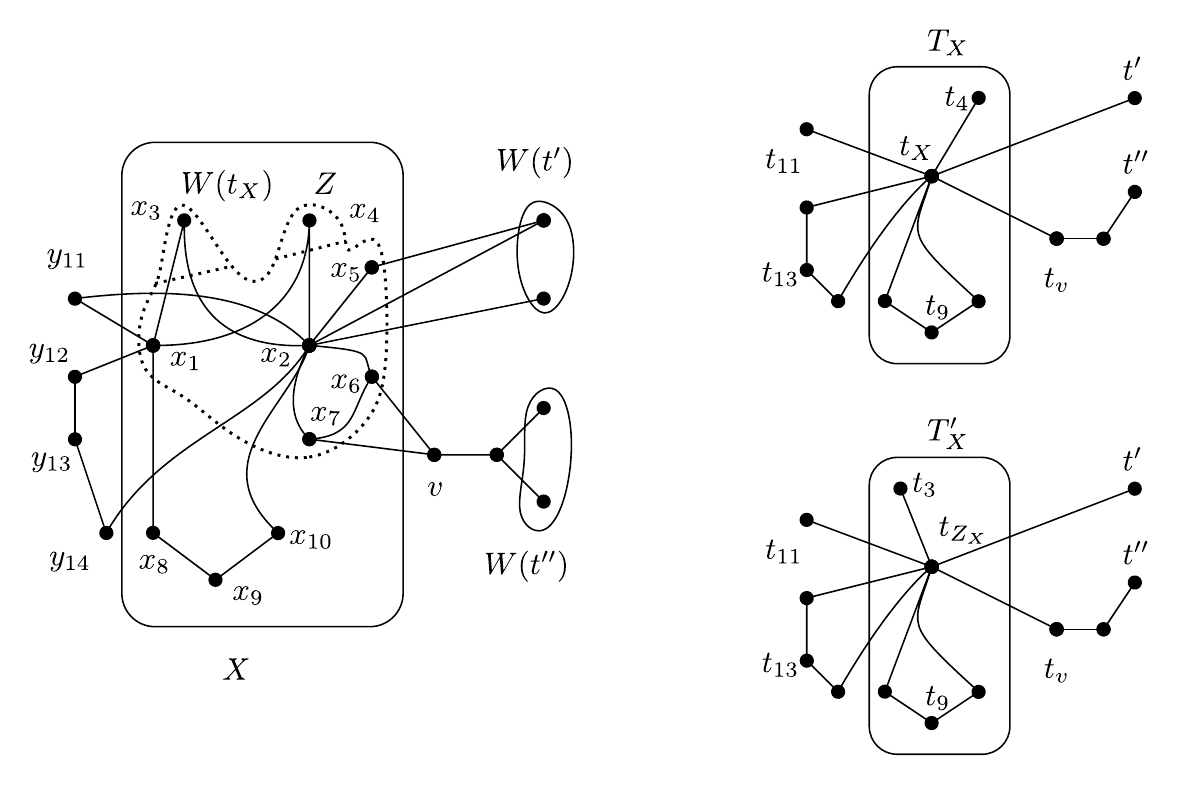}
  \caption{Replacing $W(t_X)$ by $Z$ in $\calX$ in Lemma~\ref{lemma:replacement2}. 
  \label{fig:replacement}}
\end{figure}

Let $M_X$ be the set of marked vertices in $X$. 
By Lemma~\ref{lemma:marking1} and Lemma~\ref{lemma:marking2}, 
every vertex in $M_X$ is in $W(t_X)$. 
Let $Z_X$ be \emph{any} connected core of minimum 
cardinality, which contains $M_X$. 
Let $\calX^*$ be obtained from $\calX$ by deleting $X$ 
and adding $Z_X$ and $\{v\}$ for every vertex $v$ in 
${\hat X} \setminus Z_X$. 
For each monochromatic component $X$ in $\calX^*$, 
fix a spanning tree $T_X$ of $G[X]$. 
Let $F^*$ denote $\bigcup_{X \in \calX^*} E(T_X)$.

\begin{lemma}\label{lemma:replacement2} 
$G/F^*$ is a cactus and $|F^*| \le |F|$.
\end{lemma}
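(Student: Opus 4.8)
The plan is to show two things: that $|F^*| \le |F|$ and that $G/F^*$ is a cactus, and the main work goes into the latter. For the cardinality bound, I would first observe that outside of $X$ the collections $\calX$ and $\calX^*$ agree, so the only difference in the two edge sets comes from the big witness set $W(t_X)$ being replaced by the minimum connected core $Z_X$. Since $W(t_X)$ is itself a connected core of $G[\hat X]$ containing $M_X$ (by Lemma~\ref{lemma:witness-conn-core} together with the fact that $M_X \subseteq W(t_X)$, which holds by Lemmas~\ref{lemma:marking1} and \ref{lemma:marking2}), and $Z_X$ is a \emph{minimum} cardinality connected core containing $M_X$, we get $|Z_X| \le |W(t_X)|$. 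A spanning tree of a connected set on $z$ vertices has $z-1$ edges, so replacing $E(T_{W(t_X)})$ by $E(T_{Z_X})$ inside $F$ does not increase the count; the edges of the new singleton sets $\{v\}$ for $v \in \hat X \setminus Z_X$ contribute nothing. Hence $|F^*| \le |F|$.

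For the harder part, $G/F^*$ being a cactus, the strategy is to produce $G/F^*$ from $G/F = T$ by a sequence of operations each of which preserves being a cactus, invoking Observation~\ref{obs:core-to-cactus} and Observation~\ref{obs:cactus-prop}(\ref{item:subdivide}). Concretely: inside $G[\hat X]$ I would contract a spanning tree of $Z_X$ to a single vertex $z$; by Observation~\ref{obs:core-to-cactus} (applied to $H = G[\hat X]$ with connected core $Z_X$) the result $G[\hat X]/E(T_{Z_X})$ is a cactus, and in fact a star-of-cable-paths hanging off $z$ (plus possibly pendant cycles through $z$ if two endpoints of a cable path both attach to $Z_X$). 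I would then argue that the global graph $G/F^*$ decomposes, relative to $G/F = T$, as $T$ with the vertex $t_X$ (and its surrounding singleton region inside $\hat X$) ``re-expanded'' into this local cactus. The key structural claim is that $\hat X$ meets the rest of $G$ only through $Z_X$: every vertex of $\hat X \setminus Z_X$ lies on a cable path or is isolated in $G - Z_X$ with all neighbors in $Z_X$, so no edge of $G$ leaves $\hat X \setminus Z_X$ to a vertex outside $\hat X$. Consequently $G/F^*$ is obtained from $(G - (\hat X \setminus Z_X))/((F^* \cap E(G - (\hat X\setminus Z_X))))$ — which is essentially $T$ with $W(t_X)$ possibly shrunk to $Z_X$, still a cactus by the subdivision/contraction observations and the fact that $Z_X \supseteq M_X$ carries all the external adjacencies of $W(t_X)$ — by attaching the local cactus $G[\hat X]/E(T_{Z_X})$ at the single vertex $z$. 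Attaching two cacti at a common cut vertex yields a cactus, which finishes the argument.

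The step I expect to be the main obstacle is verifying that identifying $z$ with the ``re-shrunk'' copy of $t_X$ in the rest of the graph genuinely produces a cactus rather than creating an edge in two cycles. This requires care on two fronts: first, that $Z_X$ really does capture \emph{all} neighbors in $X$ of every adjacent colored component and every color-$5$ vertex (this is exactly what Marking Scheme~\ref{mark} and Lemmas~\ref{lemma:marking1}, \ref{lemma:marking2} guarantee, so that collapsing $W(t_X) \setminus Z_X$ down to singletons does not merge two distinct external cycles through $z$); and second, that the cable paths and pendant cycles living in $\hat X \setminus Z_X$ do not interact with cycles of $T$ passing through $t_X$ — which holds because those components have their entire neighborhood inside $Z_X = $ (the re-expansion of) a single vertex of the quotient, so each contributes at most a pendant cycle or a subdivided edge at $z$, operations covered by Observation~\ref{obs:cactus-prop}(\ref{item:subdivide}) and Observation~\ref{obs:core-to-cactus}. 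Once these local-to-global gluings are checked, the cactus property of $G/F^*$ follows, completing the proof.
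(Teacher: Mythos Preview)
Your overall architecture matches the paper's: bound $|F^*|$ via $|Z_X|\le |W(t_X)|$, then split $G/F^*$ into an ``inside $\hat X$'' piece (a cactus by Observation~\ref{obs:core-to-cactus}) and an ``outside $\hat X$'' piece (inherited from $T$), and glue them at the single vertex $t_Z$. That is exactly what the paper does.

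There is, however, a genuine gap at the step you yourself flag as the main obstacle. You assert that ``no edge of $G$ leaves $\hat X\setminus Z_X$ to a vertex outside $\hat X$'', justifying it by the connected-core property of $Z_X$ and by Marking Scheme~\ref{mark} together with Lemmas~\ref{lemma:marking1} and \ref{lemma:marking2}. Neither justification works. The core property of $Z_X$ is only relative to $G[\hat X]$: it tells you that components of $G[\hat X]-Z_X$ are cable paths or isolated vertices \emph{in $G[\hat X]$}, which says nothing about edges from a vertex of $X\setminus Z_X$ to $V(G)\setminus\hat X$. And Lemmas~\ref{lemma:marking1}, \ref{lemma:marking2} point the wrong way: they show $M_X\subseteq W(t_X)$, not that every vertex of $X$ with an external neighbour is marked. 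In particular, your argument does not exclude the case where some $x\in X\setminus Z_X$ is adjacent to a vertex $y\notin\hat X$ lying in a singleton monochromatic component; such an edge would place the new edge $t_Z t_y$ in two cycles of $G/F^*$ and destroy the cactus property.

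The paper closes this gap by a case analysis that uses the hypothetical witness structure $\calW$ rather than just the colouring: for any $x\in\hat X$ with a neighbour $y\in V(G)\setminus\hat X$, one looks at the position of $t_y$ in the cactus $T$. Either $t_y$ lies on a cable path in $T$ between $t_X$ and some other big-witness vertex, in which case $y$ has been recoloured to $5$ and hence $x\in M_X\subseteq Z_X$; or $t_y$ is a leaf at $t_X$ or sits in a pendant cycle with $t_X$ as its unique big-witness vertex, in which case $y$ is in a cable-path or isolated component of $G-X$ and therefore $y\in\hat X$, a contradiction. This is the step you are missing, and without it the gluing argument does not go through. A secondary imprecision: your ``outside'' piece is not literally ``$T$ with $W(t_X)$ shrunk to $Z_X$'', since $Z_X$ need not be contained in $W(t_X)$; the paper instead proves the isomorphism $T-(V(T_X)\setminus\{t_X\})\cong T'-(V(T'_X)\setminus\{t_Z\})$, which is the correct formulation once the boundary claim above is established.
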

\begin{proof}
Let $S_X$ be the edges of the spanning of $G[W(t_X)]$ 
that is contained in $F$ and let $S_Z$ be the edges of a 
spanning tree of $G[Z_X]$. 
Let $F'$ denote the set of edges obtained from $F$ by 
deleting $S_X$ and adding $S_Z$. 
By Lemmas~\ref{lemma:witness-conn-core},  \ref{lemma:marking1} 
and \ref{lemma:marking2}, $W(t_X)$ is a connected core of 
$G[\hat X]$ which contains $M_X$.
Since $Z_X$ is a minimum-sized connected core 
which contains $M_X$, we have $|Z_X| \le |W(t_X)|$ implying $|F’| \le |F|$. 

We now argue that $G/F’$ is a cactus. 
Let $\calW’$ be a $T’$-witness structure of $G$ where $T’ = G/F’$.
We argue that $\calW’$ can be obtained from a $T$-witness structure
$\calW$ of $G$.
We first claim that if a witness set in $\calW$ intersects 
$\hat{X}$ then it is a subset of $\hat X$.
Assume that a witness set $W(t_X)$  
intersects $\hat X$ and contains a vertex $y$ in $W \setminus \hat X$.
As $W(t_X)$ is a connected set, there is a path in $G[W(t_X)]$ 
between $y$ and $x$.
Since, $X$ is a separator between $\hat X \setminus X$ 
and $V(G) \setminus \hat X$, this path intersects $X$.
This implies that there is a witness set with a vertex in $X$ 
and a vertex outside $X$.
Since every witness set is monochromatic, no such witness set exists.
Hence, no witness set contains $\hat X$ and 
vertex outside $\hat X$.
This implies $\calW’$ can be obtained from $\calW$ by 
removing all witness sets in ${\hat X}$ and adding $Z_X$ 
and a singleton witness set for each vertex in $\hat X \setminus Z$. 
Formally, $\calW’ = (\calW \setminus \calW_{\hat X}) \cup \calW’_1$ 
where $ \calW’_1 = \{Z_X\} \cup \{\{v\}|\ v \in \hat{X} \setminus Z_X\}$
and $\calW_{\hat X}$ is the set of all witness sets contained in $\hat X$. 
Refer to Figure~\ref{fig:replacement} for an illustration.

Let $T_X$ be the subgraph induced on the vertices of $T$ 
whose corresponding witness sets are in $\hat X$. 
That is, let $V_1 = \{t \mid \ t \in V(T) \text{ and } W(t_X) \subseteq \hat X\}$
and $T_X = T[V_1]$.
We similarly define $T’_X$. Since $\calW$ and $\calW’$ 
are $T$-witness structure and $T’$-witness structure, respectively, 
of $G$, the graphs $T - V(T_X)$ and $T’ -V(T’_X)$ are isomorphic 
to each other.

Recall that $W(t_X)$ is the big witness set in $X$.
Let $t_Z$ be the vertex in $T’$ such that $W’(t_Z) = Z_X$.
We now argue that $T - (V(T_X) \setminus \{t_X\})$ and 
$T’ - (V(T’_X) \setminus \{t_Z\})$ are isomorphic.
It suffices to prove that the neighbors of $t_X$ in $T - V(T_X)$ 
are identical to that of $t’_x$ in $T’ - V(T’_X)$, i.e., 
$N_{T’}(t_Z) \setminus V(T’_X) = N_T(t_X) \setminus V(T_X)$.
Consider a vertex $x$ in $\hat X$ which has a neighbor $y$ in
$V(G) \setminus \hat X$.
Let $t_y$ be a vertex in $T$ such that $y$ is in $W(t_y)$. 
There are three possibilities for $t_y$ in $T$. $(1)$ $t_y$ is 
in a path between $t_X$ and some other vertex in $T$ 
which corresponds to a big witness set $(2)$ $t_y$ is in 
a pendant cycle in which $t_X$ is the unique (cut) vertex 
which corresponds to a big witness set $(3)$ $t_y$ is a 
leaf adjacent to $t_X$.
In the first case, $y$ is colored $5$, and hence $x$ is in $M_X$.
In the second and third cases, $y$ is an isolated vertex or in a 
cable path in $G - W(t_X)$ and hence in $G - X$.
This implies that $y$ is in $\hat X$ contradicting that $y$
is in $V(G) \setminus \hat X$. 
Hence $M_X$ contains every vertex in $X$ that has a neighbor 
in $V(G) \setminus \hat X$. 
This implies both $W(t_X)$ and $Z$ contain every vertex in 
$X$ that has a neighbor in $V(G) \setminus \hat X$, and 
therefore $N_{T’}(t_Z) \setminus V(T’_X) = N_T(t) \setminus V(T_X)$.
Hence $T - (V(T_X) \setminus \{t_X\})$ and 
$T’ - (V(T’_X) \setminus \{t_Z\})$ are isomorphic.

By Observation~\ref{obs:core-to-cactus}, 
both $G[\hat X]/S_t$ and $G[\hat X]/S_Z$ are cacti. 
Once again, since $M_x$ (hence $W(t_X)$ and $Z$) contains all 
vertices in $\hat X$ which has neighbors outside, $t_X$ and $t_Z$ 
are the only vertices in $T$ and $T’$, respectively, 
which have neighbors outside $T_X$ and $T’_X$, respectively. 
$T’ - (V(T’_X)\setminus \{t_Z\})$ is cactus as it is isomorphic to 
$T - (V(T_X) \setminus \{t_X\})$ which is a cactus. 
Since $T’_X$ is also a cactus and $t_Z$ is the only vertex which 
has neighbors outside $T’_X$, $T’$ is a cactus.
This concludes the proof that $G/F’$ is a cactus. 
\end{proof}

\subsection{Finding Minimum Connected Cores}
\label{subsec:core}

Recall that a connected core of a graph $G$ is a connected set 
of vertices of $G$ such that every component of $G - Z$ is an 
isolated vertex or a cable path whose both endpoints have neighbors in $Z$. 
In this subsection, we present a simple branching algorithm to compute 
a minimum connected core containing a given subset of vertices. 
We use an algorithm known for the \textsc{Steiner Tree} problem as 
a subroutine. In the \textsc{Steiner Tree} problem, 
we are given a graph, a set of vertices called \emph{terminals}, 
and a positive integer $\ell$, and the goal is to determine whether 
there is a tree with at most $\ell$ edges that contain all the terminals. 
This problem admits an algorithm with $\mathcal{O}^*(2^{\ell})$ 
running time \cite{nederlof2013fast} where $\ell$ is the number of terminals.
 
\begin{lemma}
\label{thm:conn-core-algo}
There is an algorithm that given a connected graph $G$ on $n$ vertices, 
a subset $X$ of its vertices and an integer $k$, 
either computes a minimum connected core of $G$ 
which contains $X$ and is of size at most $k$ or 
correctly concludes that no such connected core exists 
in $6^k \cdot n^{\calO(1)}$ time.
\end{lemma}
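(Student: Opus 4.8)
The plan is to design a bounded-search-tree algorithm that guesses how the sought connected core $Z$ (with $X\subseteq Z$ and $|Z|\le k$) interacts with the current graph, and to use the $\OO^*(2^\ell)$ \textsc{Steiner Tree} routine at the leaves to stitch a guessed vertex set into a connected subgraph of minimum size. First I would observe that, since $Z$ is required to be connected and contains $X$, it suffices to decide \emph{which} vertices of $V(G)\setminus X$ to add: once the target set $Z$ is fixed, connectivity within $G[Z]$ is automatic if $Z$ is a valid core, and its size is exactly $|Z|$, so the ``minimum cardinality'' requirement becomes a minimization over valid target sets. The obstacle to a naive approach is that there can be many candidate vertices to include; the branching must be driven by a \emph{witness of invalidity}, i.e.\ a component $C$ of $G-Z$ that is neither an isolated vertex nor a cable path with both endpoints having a neighbour in $Z$.

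The key structural fact I would establish is that any invalid component $C$ contains a short \emph{obstruction} of constant size (say at most four vertices) all of whose vertices cannot simultaneously lie outside $Z$: concretely, either $C$ has a vertex $y$ with $\deg_{G[C]}(y)\ge 3$ (a ``branch vertex''), or $C$ is a path/cycle-like component with an endpoint whose only neighbour in $V(G)\setminus C$ is missing from $Z$. In the first case at least one of $y$ together with a bounded set of its neighbours must enter $Z$ in order to kill the high-degree vertex; in the second case the offending endpoint itself must enter $Z$. This gives a branching rule with constantly many children (I expect a branching factor of at most $6$, matching the claimed $6^k$), where in each child we move one specified vertex into the partial solution $X$ and decrement the budget $k$ by one; since $|Z|\le k$, the recursion depth is at most $k$, yielding a search tree of size $6^k$.

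At a leaf of the search tree we have a vertex set $X'$ with $X\subseteq X'$, $|X'|\le k$, such that $X'$ is a valid core of $G$ (every component of $G-X'$ is an isolated vertex or a cable path with both endpoints adjacent to $X'$); what may fail is connectivity of $G[X']$. To repair this I would call the \textsc{Steiner Tree} algorithm of Nederlof~\cite{nederlof2013fast} with terminal set $X'$ to find a minimum-size tree connecting $X'$; its vertex set $Z'\supseteq X'$ is then a connected superset of a core, hence itself a connected core (adding connected vertices to a core keeps it a core, as any newly absorbed vertices only shrink the components of $G-Z'$), and it contains $X$. Running \textsc{Steiner Tree} with $|X'|\le k$ terminals costs $\OO^*(2^{k})$, and we do this at each of the $6^k$ leaves, so the total running time is $6^k\cdot n^{\OO(1)}$. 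Finally, across all leaves we return the smallest $Z'$ found (or conclude infeasibility if no leaf yields $|Z'|\le k$); correctness of the minimum follows because an optimal connected core, restricted to the branching choices, survives down some root-to-leaf path as a valid core $X^\star$ with $X^\star\subseteq Z_{\mathrm{opt}}$, and then the Steiner-tree call at that leaf returns a connected core of size at most $|Z_{\mathrm{opt}}|$ since $Z_{\mathrm{opt}}$ itself is a connected graph containing $X^\star$.

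The main obstacle I anticipate is proving the bounded-obstruction claim cleanly: I must argue that whenever $X'$ is \emph{not} a valid core, one can exhibit a constant-size set of vertices, at least one of which provably lies in \emph{every} valid core extending $X'$ within budget, so that branching over this set is exhaustive and sound. The subtle point is handling cable-path components whose endpoints fail the neighbour condition versus components containing a degree-$\ge 3$ vertex uniformly, and bounding the number of branches by $6$ rather than something larger; this requires a careful case analysis on the structure of a minimal invalid component, but it is routine graph theory once set up correctly.
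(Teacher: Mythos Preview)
Your overall architecture is exactly that of the paper: branch to grow the partial set until it is a (not necessarily connected) core, then at each leaf call a $2^{k}\cdot n^{\calO(1)}$ \textsc{Steiner Tree} routine to connect it, and return the best leaf. The correctness argument you sketch (an optimal $Z_{\mathrm{opt}}$ is consistent with some root-to-leaf path, and at that leaf the Steiner tree cannot exceed $|Z_{\mathrm{opt}}|$) is also the paper's argument.

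There is, however, a genuine gap in your running-time accounting. You allow a branching factor of~$6$, obtain a search tree with $6^{k}$ leaves, and then spend $2^{k}\cdot n^{\calO(1)}$ at each leaf; that gives $6^{k}\cdot 2^{k}=12^{k}$, not $6^{k}$. To hit $6^{k}$ you must bring the branching factor down to~$3$. The paper does this with a single, sharper rule: if there is a path $(u,v,w)$ in $G-Z$ with $|N_{G}(v)|\ge 3$, branch on which of $u,v,w$ enters $Z$. Note that the degree threshold is measured in $G$, not in $G[C]$ as you wrote; this matters, because an internal vertex of a path in $G-Z$ with an extra neighbour in $Z$ is precisely what prevents that path from being a \emph{cable path in $G$}, and your condition $\deg_{G[C]}(y)\ge 3$ would miss this case. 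Once this rule is inapplicable, every vertex of $G-Z$ has degree at most two in $G$, so the components of $G-Z$ are already isolated vertices or cable paths of $G$; the only remaining obstruction is a cable path whose endpoint has degree~$1$ in $G$, and that is handled by a \emph{reduction} (not branching) rule that forces the unique neighbour into $Z$. With branching factor~$3$ and the $2^{k}$ Steiner step at the leaves, the product is the claimed $6^{k}\cdot n^{\calO(1)}$.
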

\begin{proof}
We first construct a set of candidate cores of $G$ via a 
branching algorithm.  
At each leaf of the branching tree, we extend the core constructed 
to a connected set using the \textsc{Steiner Tree} algorithm.

Let $Z$ denote a partial solution to the instance of finding a connected core.
Initialize $Z$ to $X$ and decrease $k$ by $|X|$.
The following branching rule is derived from the observation 
that if $(u,v,w)$ is a path outside the connected core of $G$ 
then the degree of $v$ is two in $G$.
\begin{branching rule}
\label{br1}
If there is a path $(u,v,w)$ in $G-Z$ such that $|N_G(v)| \geq 3$, 
then branch into three cases where $u$ or $v$ or $w$ is added 
to $Z$ and decrease $k$ by one in each branch.
\end{branching rule}
Observe that when this rule is no longer applicable, all vertices of 
$G-Z$ have the degree at most two.
Hence the components of $G-Z$ are cable paths in $G$ or isolated vertices.
Next, we have the following reduction rule from the observation: if $xy$ is an isolated edge obtained by removing 
the minimum connected core of $G$, then $x$ and $y$ have 
degree two or more in $G$.
\begin{reduction rule}
\label{rr1}
If there is an edge $uv$ in $G-Z$ such that $u$ is the unique 
neighbor of $v$ then add $u$ into $Z$ and reduce $k$ by one. 
\end{reduction rule}
Since the only neighbor of $v$ is $u$, the edge $uv$ cannot be
in a cable path in $G$ whose both endpoints have neighbors in $Z$.
Now, if there exists an optimal solution $Z^*$ that does not contain $u$, 
then $v \in Z^*$ and $Z’ = (Z^* \setminus \{v\}) \cup \{u\}$ is also a 
the connected core of $G$.
This justifies the correctness of the rule.

We apply the above rules exhaustively and consider the search tree 
constructed.
Note that each node of the search tree is labeled with either a triple 
$(u,v,w)$ showing that the Branching Rule~\ref{br1} was applied, 
or a pair $(x,y)$ showing that Reduction Rule~\ref{rr1} was applied at this node.
If at any node in the search tree, $k$ is $0$ and the set $Z$ is not a connected core of $G$, we abort the computation at that node.
If all the leaves of the search tree are aborted, then we output 
\noalgo\ as a solution to this instance.

Next, we claim that if none of the rules are applicable at a leaf of 
the search tree, then the corresponding $Z$ is a core of $G$.
Assume that $Z$ is not a core of $G$.
Then there is a component $C$ of $G-Z$ that is neither an isolated
vertex nor a cable path in $G$ whose both endpoints have neighbors in $Z$. 
Hence such a $C$ has at least two vertices. 
Furthermore as Branching Rule~\ref{br1} is not applicable, all vertices in $G-Z$
have a maximum degree of $2$. 
First, consider the case when $C$ is a cycle in $G-Z$. 
As $G$ is connected, $C$ has a vertex $v$ with a neighbor in $Z$. 
Let $u$ and $w$ be the neighbours of $v$ in $C$. 
Then, $(u,v,w)$ is a path in $G-Z$ with $|N_G(v)| \geq 3$ leading to a contradiction. 
Next, consider the case when $C$ is a path in $G-Z$ with endpoints $u$ and $v$. 
If there is an internal vertex on this path that has a neighbor in $Z$, then, as before, we get a contradiction. 
Hence, $C$ is a cable path in $G$, with endpoints $u$ and $v$.
As $Z$ is not a core of the connected graph $G$, one of $u$ or $v$ 
has no neighbour in $Z$, i.e. it is a vertex of degree $1$ in $G$. 
This implies that Reduction rule \ref{rr1} is applicable, leading to a contradiction. 
Hence $Z$ is a core of $G$.

Consider a leaf of the search tree and the corresponding core $Z$. 
As $Z$ may not be connected in $G$, we may have to add additional 
vertices to $Z$ to ensure being connected. 
Observe that this can be achieved by computing a minimum 
\textsc{Steiner Tree} with $Z$ being the terminal set in $G$ in $\OO(2^k)$
time \cite{nederlof2013fast}. 
Let $Z’ \supseteq Z$ be the vertices of this Steiner tree obtained 
by this algorithm. 
Observe that $Z’$ is a connected core of $G$, as $G-Z$ is a collection 
of isolated vertices and cable paths of $G$. 
Let $\bar{Z}$ be the minimum cardinality connected core over all 
the leaves of the search tree. If $|\bar{Z}| \leq k$, we output $\bar{Z}$ 
and otherwise, we output \noalgo\ as the solution to the instance.

Let us now argue the correctness of this algorithm. 
Assume $Z^*$ is an optimal solution of size at most $k$. 
We claim that the above algorithm finds a connected core $\bar{Z}$ 
such that $|\bar{Z}| \le |Z^*|$. 
To argue this, we associate a path in the search tree of branching
algorithm to the set $Z^*$. 
Now consider an internal node in the search tree labeled with $(a, b, c)$. 
Since Branching Rule~\ref{br1} is applied at this node, $(a, b, c)$ is a path 
in $G-Z$ and $|N_G(b)| \ge 3$.
As $Z^*$ is a core of $G$, at least one of $a, b, c$ must be present. 
Similarly, for any node labeled with a pair $(x,y)$, one of these vertices, 
say $y$, is of degree $1$ in $G$, and hence $Z^*$ must contain one of them.
Recall that, by previous arguments, we may assume that $x \in Z^*$.
Hence, we start from the root of the search tree and navigate to a 
leaf along the edges corresponding to choices consistent with $Z^*$.
If multiple choices are consistent with $Z^*$,  we arbitrarily pick 
one of them and proceed.
Consider the set $\tilde{Z}$ obtained at the leaf via this navigation 
consistent with $Z^*$ from the root of the search tree.
Clearly, $\tilde{Z} \subseteq Z^*$ and $\tilde{Z}$ is a core 
(not necessarily connected) of $G$.
Let $T$ be an optimal solution for an instance of $(H, \tilde{Z})$ 
of \textsc{Steiner Tree} as defined above.
Since, $Z^*$ is a connected core of $G$ and $\tilde{Z} \subseteq Z^*$ 
we know that $Z^* \setminus \tilde{Z}$ is a solution to this 
\textsc{Steiner Tree} instance.
By the optimality of $T$, $|T| \le |Z^* \setminus \tilde{Z}|$ and 
hence $\bar{Z} = \tilde{Z} \cup T$ is the desired solution.

Let us now analyze the running time of this algorithm.
At each application of the Branching Rule~\ref{br1}, 
we have a three-way branch, and the measure drops by $1$ resulting
in the branching vector $(1,1,1)$.
This leads to the recurrence $T(k) \le 3T(k - 1)$ with solution 
$T(k)=3^k\cdot n^{\calO(1)}$.
Then, at each leaf of the search tree, we run the algorithm for 
finding a minimum Steiner tree, which runs in $2^k \cdot n^{\calO(1)}$ time.
If the Steiner tree obtained is of size strictly more than 
$k$, then we discard this node. 
Therefore, the overall running time is $6^k \cdot n^{\calO(1)}$. 
\end{proof}

\section{The Overall Algorithm}
\label{sec:cactus-algo-together}

In this section, we present a single exponential time \FPT\ 
algorithm to solve \textsc{Cactus Contraction}.
As mentioned before, we first present a randomized algorithm
when the input graph is $2$-connected (Lemma~\ref{lemma:cactus-con-2-conn}). 
Then, we argue that a connected graph is $k$-contractible 
to a cactus if and only if each of its $2$-connected components 
is contractible to a cactus using at most $k$ edge contractions 
in total (Lemma~\ref{lemma:2-conn-cactus}). 
Using this result and the algorithm for 2-connected graphs, 
we present a randomized algorithm on general graphs 
(Theorem~\ref{thm:cactus-con-rand}). 
Finally, we describe how to derandomize this algorithm 
using $(n,k)$-universal sets (Theorem~\ref{thm:cactus-con-det}).

\subsection{A Randomized \FPT\ Algorithm}

Consider a $2$-connected graph $G$ and an integer $k$. 
The algorithm consists of the following four steps and returns
whether or not $(G,k)$ is a \yesinst\ instance  of \textsc{Cactus Contraction}.

\begin{enumerate}
\item Construct a coloring $f: V(G) \rightarrow \{1, 2, 3\}$ of $V(G)$
by assigning one of the colors to each vertex independently 
and uniformly at random.

\item Expand the range of coloring $f$ from $\{1, 2, 3\}$ to $\{1, 2, 3, 4, 5\}$
using Recoloring~\ref{recolor:path-one-witness} and 
Recoloring~\ref{recolor:path-two-witnesses}. 
Recall that the recoloring procedure exhaustively applies 
Recoloring~\ref{recolor:path-one-witness} before starting 
Recoloring~\ref{recolor:path-two-witnesses}. 
Also, once it recolors a vertex to $4$, it does not change its color. 

\item 
For a monochromatic component $X$ in $\calX$ with 
$f(X) \in \{1, 2, 3\}$, apply Marking Scheme~\ref{mark}. 
Let $M_X$ be the set of marked vertices in $X$. 
Compute a connected core of $G[\hat X]$, say $Z_X$, 
of minimum cardinality that contains $M_X$. 
Here, $\hat{X}$ is the superset of $X$ that contains all 
vertices in the connected components of $G - X$ that are
isolated vertices or cable paths in $G$ whose both endpoints
have neighbors in $X$. Modify the partition $\calX$ by removing $X$
and adding $Z_X$ and $\{v\}$ for every vertex $v$ in 
${\hat X} \setminus Z_X$. 
Repeat until no monochromatic component is replaced in $\calX$.

\item
For each monochromatic component $X$ in $\calX$, 
fix a spanning tree $T_X$ of $G[X]$. 
Let $F$ denote $\bigcup_{X \in \calX} E(T_X)$. 
If $|F| \le k$ and $G/F$ is a cactus, then the return 
\yesalgo; otherwise, return \noalgo.
\end{enumerate}

\begin{lemma} \label{lemma:cactus-con-2-conn} 
Given an instance $(G, k)$ of \textsc{Cactus Contraction} 
where $G$ is a $2$-connected graph on $n$ vertices, 
there is a one-sided error Monte Carlo algorithm with 
false negatives which determines whether $(G, k)$ is a 
\yesinst\ instance in  $2^{\calO(k)} \cdot n^{\mathcal{O}(1)}$ time. 
Further, the algorithm returns the correct answer with constant probability.
\end{lemma}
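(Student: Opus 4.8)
The plan is to establish two complementary facts about the four-step procedure above and then amplify. Throughout, I assume (consistent with the standing assumption of Sections~\ref{sec:refinement-coloring}--\ref{sec:cactus-extract-big-witness}) that $(G,k)$ does not satisfy the premise of Lemma~\ref{lemma:cycle_instance}; the case when it does is settled in polynomial time and can be dispatched first. \emph{Soundness} is immediate: Step~4 returns \yesalgo\ only after producing an explicit $F\subseteq E(G)$ with $|F|\le k$ and $G/F$ a cactus, which certifies that $(G,k)$ is a \yesinst\ instance, so there are no false positives regardless of the random coloring.

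For \emph{completeness}, suppose $(G,k)$ is a \yesinst\ instance, and fix a \emph{minimal} $F_0\subseteq E(G)$ with $|F_0|\le k$ such that $T:=G/F_0$ is a cactus; let $\calW$ be the $T$-witness structure of $G$. By Lemma~\ref{lemma:random-coloring-compatible}, the coloring $f$ from Step~1 is compatible with $\calW$ with probability at least $1/3^{6k}$, and I condition on this event. By Lemma~\ref{lemma:recoloring-path-one-witness} and Lemma~\ref{lemma:recoloring-path-two-witnesses}, the recoloring in Step~2 only recolors vertices lying in singleton witness sets of $\calW$, so every big witness set of $\calW$ survives Step~2 intact and monochromatic with a color in $\{1,2,3\}$; Lemma~\ref{lemma:correct-recoloring2} then shows that afterwards every part of $\calX$ of size at least two with a color in $\{1,2,3\}$ contains exactly one big witness set of $\calW$. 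When Step~3 processes such a part $X$, Lemmas~\ref{lemma:witness-conn-core}, \ref{lemma:marking1}, and \ref{lemma:marking2} say that its big witness set $W(t_X)$ is a connected core of $G[\hat X]$ containing $M_X$, so the minimum such core $Z_X$ has $|Z_X|\le|W(t_X)|$, and Lemma~\ref{lemma:replacement2} says that replacing $X$ by $Z_X$ together with the singletons of $\hat X\setminus Z_X$ again yields a cactus witness structure with no increase in the number of spanning-forest edges.

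The remaining work — and the step I expect to be the main obstacle — is to upgrade this single-replacement statement into a correct analysis of the \emph{iteration} in Step~3. I would carry an invariant asserting that at the start of each round the current partition $\calX$ (together with singletons for the other vertices) is the $T'$-witness structure of $G$ for some cactus $T'=G/F'$ with $|F'|\le|F_0|$, that the parts of size at least two colored in $\{1,2,3\}$ are precisely the big witness sets of this structure, and that the recolored $f$ — which is never modified again — still functions as a coloring compatible with $T'$ in the sense required for Lemmas~\ref{lemma:recoloring-path-one-witness} through \ref{lemma:replacement2} to be reapplied to the next part processed. Combined with a termination argument (the working partition can change only polynomially often), this yields that when the loop halts the edge set $F$ assembled in Step~4 satisfies $|F|\le|F_0|\le k$ and $G/F$ is a cactus, so the procedure returns \yesalgo. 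The delicacy is exactly that the recoloring rules and the replacement lemma were proved against a \emph{fixed} compatible witness structure, so one must check that each replacement preserves the existence of such a structure for the unchanged coloring.

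For the running time, Steps~1, 2, and 4 are polynomial, and Step~3 invokes the minimum-connected-core algorithm of Lemma~\ref{thm:conn-core-algo} polynomially many times, each call running in $6^k\cdot n^{\OO(1)}$ time; thus one run of the procedure takes $6^k\cdot n^{\OO(1)}$ time and, on a \yesinst\ instance, returns \yesalgo\ with probability at least $1/3^{6k}$. Running the procedure $3^{6k}$ times independently and answering \yesalgo\ iff some run does so drives the false-negative probability down to at most $(1-3^{-6k})^{3^{6k}}\le e^{-1}$, a constant, in total time $3^{6k}\cdot 6^k\cdot n^{\OO(1)}=2^{\OO(k)}\cdot n^{\OO(1)}$.
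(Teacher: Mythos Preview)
Your proposal is correct and follows essentially the same argument as the paper: soundness from the explicit certificate in Step~4, completeness by conditioning on compatibility and chaining Lemmas~\ref{lemma:recoloring-path-one-witness}--\ref{lemma:replacement2}, then amplifying by $3^{6k}$ independent repetitions against the $6^k\cdot n^{\OO(1)}$ per-run cost from Lemma~\ref{thm:conn-core-algo}. The iteration invariant you flag as the delicate step is handled by the paper at exactly the level you describe---it invokes Lemma~\ref{lemma:replacement2} once per component and simply asserts that the replacement procedure may be repeated on the modified~$\calX$ until it stabilises---so you are not missing anything the paper supplies.
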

\begin{proof}
Suppose $(G, k)$ is a \yesinst\ instance and $G$ is 
$k$-contractible to a cactus $T$. 
Fix a $T$-witness structure $\calW$ of $G$ and 
let $F$ be the set of edges such that $G/F=T$. 
We first show that if the coloring $f$ chosen in Step~1 
is a coloring compatible with $\calW$, 
then the algorithm returns the correct answer. 

Lemmas~\ref{lemma:recoloring-path-one-witness} and 
\ref{lemma:recoloring-path-two-witnesses} imply the correctness of 
Step~2 that uses Recoloring~\ref{recolor:path-one-witness} 
and Recoloring~\ref{recolor:path-two-witnesses}. 
Let $\calX$ be the collection of monochromatic components 
of $f$ after exhaustive application of these two recoloring rules. 
By Lemma~\ref{lemma:correct-recoloring2}, if a monochromatic 
component $X$ in $\calX$ contains at least two vertices and 
$f(X) \in \{1,2,3\}$, then $X$ contains a big witness set. 
Further, as \emph{all} vertices in a monochromatic component 
which does not contain a big witness set satisfy the premise of 
Lemma~\ref{lemma:recoloring-path-one-witness} or 
Lemma~\ref{lemma:recoloring-path-two-witnesses},  
the recoloring procedure never recolors a vertex that belongs 
to a big witness set in $\calW$. 
By Lemma~\ref{lemma:recoloring-path-one-witness},  every vertex 
recolored to $4$ is a singleton witness set in $\calW$. 
By Lemma~\ref{lemma:recoloring-path-two-witnesses}, 
every vertex recolored to $5$ is a singleton witness set 
in $\calW$. 
This proves the correctness of Step~2. 

Therefore, at the beginning of Step 3, 
each monochromatic component $X$ which contains at 
least two vertices also contain a big witness set $W(t_X)$ 
(Lemma~\ref{lemma:correct-recoloring2}). 
Using Lemma~\ref{thm:conn-core-algo},  we identify a set $Z_X$ 
of vertices which is \emph{at least as good as} $W(t_X)$ by 
Lemma \ref{lemma:replacement2}. 
That is, replacing the edges in the spanning tree of $G[W(t_X)]$ 
with the edges in the spanning tree of $G[Z_X]$ in $F$ leads to 
another set $F’$ such that $|F'| \leq |F|$ and $G/F'$ is a cactus. 
Then, Step 3 refines $\calX$ by deleting $X$ and adding $Z_X$ 
and $\{v\}$ for every vertex $v$ in ${\hat X} \setminus Z_X$ and 
proceeds with invoking the replacement procedure with the 
modified $\calX$. 
Step 3 terminates when no monochromatic component is 
replaced in the current $\calX$.

Hence, if $f$ is compatible with $\calW$,  
the algorithm returns the correct answer. 
By Lemma~\ref{lemma:random-coloring-compatible}, 
a random $3$-coloring $f$ of $G$ is compatible with 
$\calW$ with probability at least $1\large/3^{6k}$. 
So we run the algorithm with input $(G,k)$ $3^{6k}$ 
times and declare that $(G, k)$ is a \yesinst\ instance 
if and only if at least one of these runs returns \yesalgo. 
The probability that none of these colorings 
(each of which is generated uniformly at random) 
is compatible with $\calW$ is at most 
$(1 - \frac{1}{3^{6k}})^{3^{6k}} < 1/e$. 
That is, one of the randomly generated colorings is 
compatible with $\calW$ with probability at least $1-1/e$. 
As the algorithm returns \yesalgo\ only if it has found a set 
of edges with the desired properties,  the algorithm returns 
\yesalgo\ only if the given instance is a \yesinst\ instance of 
\textsc{Cactus Contraction}. 
Hence the algorithm returns the correct answer with the probability
at least $1-1/e$. 
Finally, observe that Steps 1, 2, and 4 take polynomial time while 
Step 3 takes $6^k \cdot n^{\calO(1)}$ time. 
Hence the total running time of the algorithm is 
$c^k \cdot n^{\mathcal{O}(1)}$ for a fixed constant $c$.
\end{proof}
 
\begin{lemma} \label{lemma:2-conn-cactus}
A connected graph is $k$-contractible to a cactus 
if and only if each of its $2$-connected components 
is contractible to a cactus using at most $k$ edge contractions.
\end{lemma}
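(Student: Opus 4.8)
The plan is to prove both directions by relating a witness structure of the whole graph to witness structures of the individual blocks. For the forward direction, suppose $G$ is connected and $k$-contractible to a cactus $T$ via a $T$-witness structure $\calW$ with a minimal solution set $F$, $|F| \le k$. The key structural observation is that blocks interact cleanly with contraction: I would first argue that every big witness set $W(t) \in \calW$ is entirely contained within a single block of $G$. Indeed, if $W(t)$ contained vertices from two different blocks, then $W(t)$ (being connected) would have to pass through a cut vertex $v$ of $G$; but then contracting a spanning tree of $W(t)$ merges vertices across $v$, and one checks that this creates either a vertex of degree $\ge 3$ lying on two cycles or destroys the block structure in a way incompatible with $T$ being a cactus --- more carefully, one shows a cut vertex of $G$ cannot be ``swallowed'' into a big witness set together with vertices on both sides without creating a second cycle through some edge. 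Once we know each big witness set lies inside one block $B_i$, the edges of $F$ partition (as a disjoint union) among the blocks: $F = \biguplus_i F_i$ where $F_i = F \cap E(B_i)$, and $\sum_i |F_i| = |F| \le k$. It then remains to check that $B_i / F_i$ is a cactus for each $i$; this follows because $B_i/F_i$ is an induced subgraph (more precisely a ``block-like'' subgraph) of $T = G/F$, and any subgraph of a cactus is a cactus.

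For the reverse direction, suppose each $2$-connected component (block) $B_i$ of $G$ is contractible to a cactus using $F_i \subseteq E(B_i)$ edge contractions, with $\sum_i |F_i| \le k$. Set $F = \bigcup_i F_i$. I would argue $G/F$ is a cactus by induction on the number of blocks, peeling off a leaf block $B$ of the block-decomposition tree of $G$ at its unique cut vertex $v$. By induction, $(G - (B \setminus \{v\}))/F'$ is a cactus, where $F'$ is the union of the $F_i$ over the remaining blocks; meanwhile $B/F_B$ is a cactus; and these two cacti share exactly the image of $v$ (which is a single vertex, since $v$ lies in no edge of any $F_i$ when $v$ is an endpoint of an edge we could have chosen otherwise --- or, if $v$ does get contracted, we note the images still meet in one vertex as the contractions on the two sides are independent). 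Gluing two cacti at a single vertex yields a cactus, since no edge can lie on a cycle spanning both sides. Hence $G/F$ is a cactus and $|F| \le k$.

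The main obstacle I expect is making precise the claim that no big witness set of a cactus-producing solution can straddle a cut vertex of $G$, i.e. the cleanest statement is: \emph{if $G/F$ is a cactus and $v$ is a cut vertex of $G$, then every witness set containing $v$ is contained in the closure of $v$ together with vertices of a single block} --- actually the sharper and more useful fact is that $F$ can be assumed to contain no edge joining two distinct blocks, because such an edge has both endpoints being (or separated by) cut vertices, and contracting it is never forced and never helps. Equivalently, one shows that if $F$ is a minimal solution then $F$ uses only intra-block edges; combined with the observation that for intra-block $F$ the graph $G/F$ restricted to the image of a block equals $B_i/F_i$, and that $G/F$ is obtained by gluing these $B_i/F_i$'s along cut-vertex images in a tree-like fashion, both directions fall out. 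I would isolate this as a small claim and prove it via the ``subdivision'' trick already used repeatedly in the paper (Observation~\ref{obs:cactus-prop}): replacing a contracted cross-block edge by keeping its endpoints separate corresponds to subdividing in $T$, which preserves cactus-ness and does not increase $|F|$, contradicting minimality.
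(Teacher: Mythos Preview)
Your forward direction rests on a claim that is false: big witness sets of a minimal (even minimum) cactus-contracting solution can straddle a cut vertex. Take $G$ to be two copies of $K_4$ glued at a single vertex $v$, say $B_1=K_4$ on $\{v,a,b,c\}$ and $B_2=K_4$ on $\{v,d,e,f\}$. With $F=\{va,vd\}$ the unique big witness set is $\{v,a,d\}$, which meets both blocks; yet $G/F$ is two triangles sharing a vertex, a cactus, and $|F|=2$ is optimum. Your subdivision trick does not rescue this: removing, say, $va$ from $F$ does not subdivide an edge of $T$ (the new vertex $a$ has two neighbours $b,c$ outside the witness set, so you recreate a $K_4$), and your fallback claim that ``$F$ uses only intra-block edges'' is vacuous, since every edge of $G$ already lies in exactly one block. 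So there is no ``cross-block edge'' to uncontract, and the whole line of attack collapses.

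What actually works, and what the paper does, bypasses witness sets entirely: since the edge sets $E(B_i)$ partition $E(G)$, one has $\sum_i |F\cap E(B_i)|=|F|\le k$ automatically, and the only thing to check is that each $B_i/(F\cap E(B_i))$ is a cactus. This holds because $B_i/(F\cap E(B_i))$ is (isomorphic to) the subgraph of $T=G/F$ induced on the images of $V(B_i)$ --- one verifies that intersecting every witness set with $V(B_i)$ still yields connected sets (cut vertices cannot disconnect them), and an induced subgraph of a cactus is a cactus. The paper runs this as an induction on $|V(G)|$, splitting at a cut vertex into $G_1$ and $G_2$; your final paragraph gestures at exactly this observation (``$G/F$ restricted to the image of a block equals $B_i/F_i$''), so you had the right endpoint but the wrong route to it.

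Your reverse direction is fine. The paper argues it slightly differently --- it takes $F=\bigcup_i F_i$, merges any witness sets that share a cut vertex, and then observes that if $G/F$ were not a cactus, two cycles sharing an edge would have to live in a single $2$-connected component of $G/F$, hence come from a single $B_i/F_i$, contradiction --- but your inductive ``peel off a leaf block and glue two cacti at one vertex'' argument is equally valid.
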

\begin{proof} 
We prove the claim by induction on the number of vertices
in the graph.  
The claim trivially holds for a graph on a single vertex. 
Assume that the claim holds for graphs with less than 
$n$ vertices. 
Consider a connected graph $G$ on $n$ vertices.
Suppose $G$ is $k$-contractible to a cactus.
Then, there is a set $F \subseteq E(G)$ of size at most $k$ 
such that $T=G/F$ is a cactus.
Let $\mathcal{W}$ be the corresponding $T$-witness structure of $G$.
Let $v$ be a cut-vertex in $G$, and $C$ be a connected component of $G-\{v\}$.
Let $G_1$ denote the subgraph of $G$ induced on $V(C) \cup \{v\}$
and $G_2$ denote the subgraph of $G$ induced on $V(G)\setminus V(C)$.
Then, $G_1$ and $G_2$ are connected graphs satisfying $V(G_1) \cap V(G_2) = \{v\}$.
Further, the sets $E(G_1)$ and $E(G_2)$ partition $E(G)$.

We claim that $G_1/(F \cap E(G_1))$ and $G_2/(F \cap E(G_2))$ are both cacti.
Consider the vertex $t_0 \in V(T)$ such that $v \in W(t_0)$. 
As the deletion of a vertex in $G_2 -\{v\}$ cannot disconnect $G_1$, 
every set in 
$\mathcal{W}_1 = \{W(t) \setminus V(G_2) \mid \ t \neq t_0, W(t) \in \mathcal{W} \} 
\cup \{ W(t_0) \setminus (V(G_2) \setminus \{v\})\}$ 
induces a connected subgraph of $G$.
Then, $F \cap E(G_1)$ is the associated set of solution edges
and $G_1/(F \cap E(G_1))$ is the subgraph of $G/F$ induced 
on $\{t \in V(T) \mid W(t) \cap V(G_1) \neq \emptyset \}$.
Since an induced subgraph of a cactus is also a cactus, 
$G_1/(F \cap E(G_1))$ is a cactus.
A similar argument holds for $G_2/(F \cap E(G_2))$.
As $E(G_1)$ and $E(G_2)$ form a partition of $E(G)$, 
$|F \cap E(G_1)| + |F \cap E(G_2)| \le k$.
By the induction hypothesis, the required claim holds for 
$G_1$ and $G_2$, and the result follows.

Conversely, let $G_1, G_2, \dots G_l$ be the $2$-connected
components of $G$ and let $F_i \subseteq E(G_i)$ be a set 
of edges such that $G_i/F_i$ is a cactus and $\sum_{i \in [l]} |F_i| \le k$.
Let $\mathcal{W}_i$ be the $G_i/F_i$-witness structure of $G_i$.
Define $\mathcal{W} = \bigcup_{i \in [l]} \mathcal{W}_i$. 
Now, we make $\mathcal{W}$ into a partition of $V(G)$ 
using the following step. 
If a vertex $v$ is in $W(t_1)$ and in $W(t_2)$ then 
add $W(t_{12}) = W_{1} \cup W_{2}$ to $\mathcal{W}$ 
and delete both $W(t_1)$ and $W(t_2)$.
Then, $F = \bigcup_{i \in [l]}F_i$ contains the edges 
of a spanning tree of every witness set in $\mathcal{W}$
and $|F| \le k$.
It remains to argue that $G/F$ is a cactus.
If $G/F$ is not a cactus, then there are two cycles 
$C_1, C_2$, which share at least two vertices.
As any cycle can have vertices from only a single 
$2$-connected component of a graph, $C_1, C_2$ are 
both in some $2$-connected component of $G$ 
leading to a contradiction. 
\end{proof}
 
Now, we use Lemmas \ref{lemma:cactus-con-2-conn} 
and \ref{lemma:2-conn-cactus} to solve 
\textsc{Cactus Contraction} on general graphs. 

\begin{theorem}
\label{thm:cactus-con-rand} 
Given an instance $(G, k)$ of \textsc{Cactus Contraction} 
where $G$ is a connected graph on $n$ vertices, 
there is a one-sided error Monte Carlo algorithm with 
false negatives which determines whether $(G, k)$ is 
a \yesinst\ instance in  $2^{\calO(k)} n^{\mathcal{O}(1)}$ time. 
Further, the algorithm returns the correct answer with constant probability.
\end{theorem}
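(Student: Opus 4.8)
The plan is to reduce \textsc{Cactus Contraction} on the connected graph $G$ to its blocks via Lemma~\ref{lemma:2-conn-cactus}, and to solve each block with the algorithm of Lemma~\ref{lemma:cactus-con-2-conn}. First I would compute the block decomposition of $G$ in polynomial time. A block that is an isolated vertex, a single edge, or a cycle is already a cactus, so it requires no contractions; I call a block \emph{nontrivial} if it is $2$-connected, has at least three vertices, and is not a cycle. Since a $2$-connected cactus on at least three vertices is a single cycle, a nontrivial block is not a cactus and hence needs at least one edge contraction. Writing $\mathsf{opt}(B)$ for the minimum number of contractions turning a block $B$ into a cactus, Lemma~\ref{lemma:2-conn-cactus} says that $(G,k)$ is a \yesinst\ instance if and only if $\sum_B \mathsf{opt}(B)\le k$, the sum ranging over all blocks, or equivalently over the nontrivial ones (the others contribute $0$). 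In particular, if $G$ has more than $k$ nontrivial blocks the algorithm immediately outputs \noalgo.

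Otherwise let $B_1,\dots,B_m$ with $m\le k$ be the nontrivial blocks. For each $B_i$ I would compute $\mathsf{opt}(B_i)$ by invoking the one-sided Monte Carlo algorithm of Lemma~\ref{lemma:cactus-con-2-conn} on the instances $(B_i,j)$ for $j=1,2,\dots,k$ in increasing order, stopping at the first $j$ on which \yesalgo\ is returned (a binary search over $j$ works too). Because that subroutine never produces a false positive, a reported \yesalgo\ on $(B_i,j)$ certifies $\mathsf{opt}(B_i)\le j$, so the first such $j$ equals $\mathsf{opt}(B_i)$ unless the subroutine suffers a false negative at $j=\mathsf{opt}(B_i)$; if no \yesalgo\ is ever seen for $j\le k$, then $\mathsf{opt}(B_i)>k$. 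To suppress false negatives I would repeat each call $\calO(\log k)$ times and treat its outcome as \yesalgo\ if any repetition is, so that a call on the true optimum fails with probability at most $1/(10k)$. The algorithm finally outputs \yesalgo\ iff $\sum_{i=1}^m \mathsf{opt}(B_i)\le k$.

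For correctness: if $(G,k)$ is a \noinst\ instance then $\sum_i \mathsf{opt}(B_i)>k$ by Lemma~\ref{lemma:2-conn-cactus}, and since each reported value is a genuine lower bound on the true $\mathsf{opt}(B_i)$ (no false positives), the computed sum is also $>k$ and the algorithm outputs \noalgo\ with probability $1$. If $(G,k)$ is a \yesinst\ instance then every $\mathsf{opt}(B_i)\le k$, so each value is computed correctly unless one of the $m\le k$ boosted calls at the true optima fails; a union bound gives overall success probability at least $9/10$, and then the algorithm outputs \yesalgo. Thus the error is one-sided with false negatives and the success probability is constant. For the running time, the block decomposition and the handling of trivial blocks take polynomial time, while for each of the $m\le k$ nontrivial blocks we make $\calO(k)$ (or $\calO(\log k)$ with binary search) outer calls, each boosted $\calO(\log k)$ times, and each individual call on $(B_i,j)$ costs $2^{\calO(k)}|V(B_i)|^{\calO(1)}\le 2^{\calO(k)}n^{\calO(1)}$ by Lemma~\ref{lemma:cactus-con-2-conn}; the total is $2^{\calO(k)}n^{\calO(1)}$, as required.

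The step I expect to need the most care is the interplay between the subroutine's one-sidedness and the search over budgets: one must observe that, because false positives cannot occur, a reported \yesalgo\ can be trusted outright and only false negatives need amplification, and that the number of nontrivial blocks is bounded by $k$ so that the $2^{\calO(k)}$-time subroutine is invoked only polynomially often. Everything else is a routine union bound and running-time accounting.
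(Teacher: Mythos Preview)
Your approach is essentially the paper's: split $G$ into blocks, discard those already cacti, observe at most $k$ nontrivial blocks remain, and for each one locate its optimum budget by calling the $2$-connected subroutine of Lemma~\ref{lemma:cactus-con-2-conn} over $j=1,\dots,k$ with $\calO(\log k)$-fold amplification, then compare the sum to $k$ via Lemma~\ref{lemma:2-conn-cactus}. One wording slip to fix: in your \noinst\ paragraph you say each reported value is a ``lower bound'' on $\mathsf{opt}(B_i)$, but it is an \emph{upper} bound---no false positives means a \yesalgo\ on $(B_i,j)$ certifies $\mathsf{opt}(B_i)\le j$, so the first such $j$ is at least $\mathsf{opt}(B_i)$; with that correction the chain $\text{(computed sum)}\ge\sum_i\mathsf{opt}(B_i)>k$ goes through and the rest of your argument is correct.
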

\begin{proof}
Let $G_1, G_2, \dots, G_q$ be the $2$-connected components 
of $G$ that are not cactus. 
If $q \le 1$ then we use Lemma~\ref{lemma:cactus-con-2-conn} 
to determine if $(G, k)$ is a \yesinst\ instance. 
If $q \ge k + 1$ then we return \noalgo\ as at least one edge 
needs to be contracted in each of these $2$-connected components. 

We now consider the case when $2 \le q \le k$. 
For each $G_i$ and each possible value $k_j$ between $1$ 
and $k$, we use Lemma \ref{lemma:cactus-con-2-conn} 
$3\log k$ times to determine if $(G_i, k_j)$ is a \yesinst\ instance. 
As there are at most $k^2$ pairs $(G_i,k_j)$, 
Lemma~\ref{lemma:cactus-con-2-conn} is invoked at most 
$3k^2\log k$ times. 
If the answer is \noalgo\ in all runs for all the values of $k_j$ 
for some $G_i$ then we return \noalgo. 
Otherwise, let $k_i’$ be the smallest value for which the algorithm
returns a solution for $G_i$. 
Since the algorithm in Lemma~\ref{lemma:cactus-con-2-conn}  
returns no false positive, $G_i$ is $k’_i$-contractible to a cactus.

On the other hand if $(G_i, k_i)$ is a \yesinst\ instance of 
\textsc{Cactus Contraction} then the probability that no 
run will output the right answer is at most $(\frac{1}{e})^{3\log k} = \frac{1}{k^3}$.
Since there are at most $k^2$ pairs $(G_i, k_j)$, 
and by the union bound on probabilities, 
the probability that there is a pair $(G_i , k_j)$ 
for which the algorithm returns the wrong answer is 
upper bounded by $k^2 \cdot \frac{1}{k^3} \ge \frac{1}{k}$. 
If such a failure does not occur, then for every $i$ we have 
that $k_i’$ is the smallest value of $k_j$ such that $G_i$ is $k_i$-contractible to a cactus.
Finally, the algorithm answers \yesalgo\ only if 
$\sum_{i = 1}^{q}k’_i \le k$,  and answers \noalgo\ otherwise.
The correctness of this step follows from Lemma~\ref{lemma:2-conn-cactus}. 
Consequently, the algorithm cannot give false positives, and it may 
give false negatives with probability at most $1/k \le 1/q \le 1/2$, 
where the two inequalities follow from the assumption that $2 \le q \le k$.
\end{proof}

\subsection{A Deterministic \FPT\ Algorithm}
\label{subsec:derand}
In this section, we show how to derandomize the algorithm 
described in Theorem~\ref{thm:cactus-con-rand} using a family 
of coloring functions that are derived from a universal set.

\begin{definition}[Universal Set]
A $(n,k)$-universal set is a family $\calH$ of subsets of $[n]$ 
such that for any $S \subseteq [n]$ of size at most $k$, 
$\{ S \cap H ~|~ H \in \calH \}$ contains all subsets of $S$.    
\end{definition}

Given integers $n, k$, one can construct a $(n, k)$-universal set 
using the following result.
  
\begin{proposition}[\cite{Naor-splitters}] 
\label{prop:universal-set}
For any $n,k \geq 1$, we can construct a $(n,k)$-universal set
of size $2^k k^{\calO(\log k)} \log n$ in time $2^k k^{\calO(\log k)} n \log n$.
\end{proposition}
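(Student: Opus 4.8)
This is a result of Naor, Schulman, and Srinivasan, so the plan is not to reprove it from scratch but to indicate how its ingredients combine; I describe the route and flag where the real work lies. The key notion is that of a \emph{splitter}: an $(n,k,\ell)$-splitter is a family $\calB$ of functions $[n]\to[\ell]$ such that for every set $S\subseteq[n]$ with $|S|=k$ there is some $h\in\calB$ splitting $S$ as evenly as possible among the $\ell$ buckets, i.e.\ $\lfloor k/\ell\rfloor \le |h^{-1}(j)\cap S| \le \lceil k/\ell\rceil$ for every $j\in[\ell]$. Two special cases are used. First, an $(n,k,k^2)$-splitter is in particular an $(n,k)$-\emph{perfect hash family} into $[k^2]$, since $\lceil k/k^2\rceil=1$ forces the function guaranteed for $S$ to be injective on $S$; such a family of size $k^{\calO(1)}\log n$ is constructible in time $k^{\calO(1)} n\log n$. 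Second, over a small ground set $[m]$ with $m=k^{\calO(1)}$ one has $(m,k,2)$-splitters of size $k^{\calO(\log k)}$, constructible in comparable time, which halve any $k$-set into two parts of size at most $\lceil k/2\rceil$.

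First I would reduce the ground set by a \emph{lifting} observation: if $\phi\colon[n]\to[m]$ is injective on a $k$-set $S$ and $\mathcal{U}_0$ is an $(m,k)$-universal set, then for any $S'\subseteq S$ there is $A\in\mathcal{U}_0$ with $A\cap\phi(S)=\phi(S')$, and hence $\phi^{-1}(A)\cap S=S'$; therefore $\{\phi^{-1}(A)\mid \phi\in\calP,\ A\in\mathcal{U}_0\}$ is an $(n,k)$-universal set of size $|\calP|\cdot|\mathcal{U}_0|$ whenever $\calP$ is an $(n,k)$-perfect hash family into $[m]$. Taking $\calP$ to be the $(n,k,k^2)$-splitter reduces the task, at the price of a $k^{\calO(1)}\log n$ factor, to constructing an $(k^2,k)$-universal set over the small ground set $[k^2]$.

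Second I would build this small-ground-set universal set by recursion on $k$ using the $(m,k,2)$-splitters: for each splitting function $\psi\colon[m]\to\{1,2\}$ recursively construct universal sets on $\psi^{-1}(1)$ and on $\psi^{-1}(2)$, both with parameter $\lceil k/2\rceil$, and include every union of one chosen set from each side. Correctness is immediate, since some $\psi$ splits $S$ as $S_1\uplus S_2$ with $|S_i|\le\lceil k/2\rceil$, and the two recursive families realize $S'\cap S_1$ and $S'\cap S_2$, whose union is $S'$. The recursion has depth $\calO(\log k)$ and bottoms out at singletons, which is exactly where the $2^{k}$ ``branch'' choices enter; the accounting of how the $k^{\calO(\log k)}$ splitter sizes multiply across the recursion tree --- which is the part actually done in \cite{Naor-splitters}, and which uses crucially that the recursion happens over a $\mathrm{poly}(k)$-size ground set so no $\log n$ reappears --- keeps the remaining overhead at $k^{\calO(\log k)}$. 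Combined with the first step this yields an $(n,k)$-universal set of size $2^{k}k^{\calO(\log k)}\log n$, and, as each sub-construction runs in time polynomial in its output size times $n$, within the stated time bound.

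The main obstacle is exactly what I deferred to \cite{Naor-splitters}: constructing the splitters above while controlling the two error sources a naive argument mishandles. The $\log n$ factor must be paid only once, which is why the perfect-hashing reduction to a $\mathrm{poly}(k)$-size ground set precedes any recursion; and the base of the exponential must stay $2^{k}$, which a perfect-hash-plus-brute-force construction would inflate to $(2e)^{k}$, so one genuinely needs the balanced splitting guarantee rather than mere injectivity. Since these are precisely the technical contributions of \cite{Naor-splitters}, we take the proposition as given.
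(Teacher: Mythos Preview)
The paper does not prove this proposition at all: it is stated as a cited result from \cite{Naor-splitters} and used as a black box. Your sketch therefore goes well beyond what the paper does, and it is a faithful outline of the Naor--Schulman--Srinivasan construction (perfect hashing to shrink the ground set to $\mathrm{poly}(k)$, then recursive halving via $(m,k,2)$-splitters to depth $\calO(\log k)$), correctly identifying where the $2^k$, the $k^{\calO(\log k)}$, and the single $\log n$ factors arise. For the purposes of this paper, simply citing the result as the authors do would have sufficed.
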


We use this $(n,k)$-universal set to construct a $3$-coloring family of $V(G)$.

\begin{lemma}\label{lemma:coloring-family}
Consider a graph $G$ and a subset $S$ of $V(G)$ of size $6k$. 
There is a family of $3$-coloring functions $\calF$,  such that for 
a given $3$-coloring $f$ of $V(G)$, there exists coloring $\psi$ 
in $\calF$ that agrees with $f$ on $S$. 
This family has size $4^{6k} k^{\calO(\log k)} \log^2 n$ 
and can be constructed in time $4^{6k} k^{\calO(\log k)} n \log^2 n$.
\end{lemma}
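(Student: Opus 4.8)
The plan is to build $\calF$ from a single $(n,6k)$-universal set by encoding every target $3$-coloring of $S$ as an ordered pair of sets from that family, where the pair implements a two-round refinement of $S$ according to colors. First I would fix a bijection between $V(G)$ and $[n]$ and, using Proposition~\ref{prop:universal-set}, construct an $(n,6k)$-universal set $\calH$ of size $2^{6k}k^{\calO(\log k)}\log n$ in time $2^{6k}k^{\calO(\log k)} n\log n$. For every ordered pair $(H_1,H_2)\in\calH\times\calH$, define $\psi_{H_1,H_2}\colon V(G)\to\{1,2,3\}$ by setting $\psi_{H_1,H_2}(v)=1$ if $v\notin H_1$, $\psi_{H_1,H_2}(v)=2$ if $v\in H_1\setminus H_2$, and $\psi_{H_1,H_2}(v)=3$ if $v\in H_1\cap H_2$. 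Let $\calF=\{\psi_{H_1,H_2}: H_1,H_2\in\calH\}$. Then $|\calF|\le|\calH|^2=4^{6k}k^{\calO(\log k)}\log^2 n$, and enumerating all these colorings explicitly costs $4^{6k}k^{\calO(\log k)} n\log^2 n$ time, which matches the claimed bounds.

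Next I would verify that $\calF$ realizes every $3$-coloring on $S$. Fix a $3$-coloring $f$ of $V(G)$ and write $S_i=f^{-1}(i)\cap S$ for $i\in\{1,2,3\}$. Since $|S|\le 6k$ and $\calH$ is $(n,6k)$-universal, $\{S\cap H: H\in\calH\}$ contains every subset of $S$; pick $H_1\in\calH$ with $H_1\cap S=S_2\cup S_3$. The set $S\cap H_1=S_2\cup S_3$ still has size at most $6k$, so applying the universal-set property a second time to this set gives $H_2\in\calH$ with $(S\cap H_1)\cap H_2=S_3$. Now take any $v\in S$: if $f(v)=1$ then $v\notin H_1$, so $\psi_{H_1,H_2}(v)=1$; if $f(v)=2$ then $v\in H_1$ but $v\notin S_3=(S\cap H_1)\cap H_2$, hence $v\notin H_2$ and $\psi_{H_1,H_2}(v)=2$; if $f(v)=3$ then $v\in H_1$ and $v\in S_3\subseteq H_2$, hence $\psi_{H_1,H_2}(v)=3$. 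Thus $\psi_{H_1,H_2}$ agrees with $f$ on $S$, as required.

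I do not anticipate a real obstacle; the only point needing care is that the second invocation of the universal-set property is applied to the set $S\cap H_1$ rather than to $S$, which is legitimate because $|S\cap H_1|\le|S|\le 6k$ and $S_3\subseteq S\cap H_1$. This is precisely what makes two rounds of an $(n,6k)$-universal set suffice to realize an arbitrary ordered $3$-partition of $S$, and it is what forces the $\log^2 n$ (rather than $\log n$) factor and the squared exponential base in the size and running-time estimates.
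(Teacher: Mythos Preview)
Your proposal is correct and follows essentially the same approach as the paper: both construct an $(n,6k)$-universal set $\calH$ via Proposition~\ref{prop:universal-set} and then take one coloring for each ordered pair in $\calH\times\calH$, using the two sets to carve out the three color classes; the verification in both cases amounts to two applications of the universal-set property to realize an arbitrary ordered $3$-partition of $S$. Your encoding $(\,\overline{H_1},\, H_1\setminus H_2,\, H_1\cap H_2\,)$ is in fact slightly cleaner than the paper's, since it always yields a genuine partition of $V(G)$, whereas the paper's triple $(A,\,Y\setminus A,\,V(G)\setminus Y)$ need not be a partition off $S$ when $A\not\subseteq Y$.
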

\begin{proof} 
Let $\calH$ be a $(n,6k)$-universal set that is constructed by 
Proposition~\ref{prop:universal-set}.
We define a family of partitions of $V(G)$ as follows.
$$\calF’ = \{ (A,B,C) ~|~ A \in \calH,\, B = Y \setminus 
A \textit{ where } Y \in \calH,\, C = V(G) \setminus Y\}$$
Observe that $\calF’$ can be constructed by considering 
each pair of sets in $\calH$. Let $f$ be a $3$-coloring that 
partitions $S$ into $3$ parts, say $S_1,S_2,S_3$. 
We claim that there is a triple $(A,B,C) \in \calF’$ such that 
$S \cap A = S_1$, $S \cap B = S_2$ and $S \cap C = S_3$.
Since $\cal H$ is a $(n,6k)$-universal-set, there is some set 
$Y \in \calH$ such that $S \cap Y = S_1 \cup S_2$, and there
is some $A \in \calH$ such that $A \cap S = S_1$.
Hence, $S \cap (Y-A) = (S \cap Y) \setminus (S \cap A) = S_2$. 
We can easily convert the family $\calF’$ into a family of 
coloring functions, where for each $(A,B,C) \in \calF’$ maps all 
vertices in $A,B,C$ to $1,2,3$ respectively.
Hence if $f$ partitions $S$ into $S_1, S_2, S_3$, then there is 
a function $\psi \in \calF$, which also partitions $S$ into $S_1, S_2, S_3$.
Since the family $\calH$ has size $2^{6k} k^{\calO(\log k)} \log n$, 
the size of $\calF$ is at most $4^{6k} k^{\calO(\log k)} \log^2 n$ 
and it can be constructed in time $4^{6k} k^{\calO(\log k)} n \log^2 n$. 
\end{proof}

In the algorithm mentioned in Theorem~\ref{thm:cactus-con-rand}, 
instead of repeatedly generating random coloring, 
we use the coloring family mentioned in Lemma~\ref{lemma:coloring-family}
to get the following result. 

\begin{theorem} \label{thm:cactus-con-det} 
Given an instance $(G, k)$ of \textsc{Cactus Contraction} where
$G$ is a connected graph on $n$ vertices, there is a deterministic
algorithm which determines whether $(G, k)$ is a \yesinst\ instance 
in $2^{\calO(k)} \cdot n^{\mathcal{O}(1)}$ time.
\end{theorem}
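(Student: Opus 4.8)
The plan is to derandomize the algorithm of Theorem~\ref{thm:cactus-con-rand} by replacing every invocation of the randomized routine for $2$-connected graphs (Lemma~\ref{lemma:cactus-con-2-conn}) with a deterministic routine that enumerates the coloring family of Lemma~\ref{lemma:coloring-family} instead of sampling colorings at random. First I would revisit the correctness argument of Lemma~\ref{lemma:cactus-con-2-conn}: there, once the input $2$-connected graph $G$ is $k$-contractible to a cactus $T$, we fix a $T$-witness structure $\calW$, and the algorithm is shown to output the correct answer whenever the $3$-coloring $f$ chosen in Step~$1$ is \emph{compatible} with $\calW$. By Lemma~\ref{lemma:random-coloring-compatible} (and the analysis preceding it), whether a $3$-coloring is compatible with $\calW$ depends only on its restriction to a set $S \subseteq V(G)$ of size exactly $6k$, and by Lemma~\ref{lemma:3-col-existence} there is at least one $3$-coloring $f_c$ compatible with $\calW$. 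Hence it suffices to run Steps~$2$--$4$ of that algorithm on \emph{some} $3$-coloring that agrees with $f_c$ on $S$.

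Next I would plug in Lemma~\ref{lemma:coloring-family}: it produces a family $\calF$ of $3$-colorings of $V(G)$, of size $4^{6k} k^{\calO(\log k)} \log^2 n$ and constructible in time $4^{6k} k^{\calO(\log k)} n \log^2 n$, with the property that for every set $S$ of size $6k$ and every $3$-coloring $f$ there is some $\psi \in \calF$ agreeing with $f$ on $S$. Crucially, $\calF$ is built from an $(n, 6k)$-universal set and therefore does not depend on $S$; this matters because the set $S$, being defined through the hypothetical witness structure $\calW$, is not known to the algorithm. So the deterministic routine for a $2$-connected graph is: construct $\calF$, and for each $\psi \in \calF$ run Steps~$2$--$4$ of the algorithm of Lemma~\ref{lemma:cactus-con-2-conn} with $f := \psi$; output \yesalgo\ if and only if at least one of these runs does. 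Since Steps~$2$--$4$ never produce false positives, the answer is correct: if $(G,k)$ is a \noinst\ instance no run answers \yesalgo, and if it is a \yesinst\ instance then some $\psi \in \calF$ agrees with a compatible $f_c$ on $S$ and that run answers \yesalgo. Using $k^{\calO(\log k)} = 2^{\calO(k)}$ and the $6^k \cdot n^{\calO(1)}$ bound for a single run (dominated by the minimum connected core computation of Lemma~\ref{thm:conn-core-algo}), this routine runs in $2^{\calO(k)} \cdot n^{\calO(1)}$ time.

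Finally, I would lift this to general connected $G$ exactly as in the proof of Theorem~\ref{thm:cactus-con-rand}, but with the probability-amplification bookkeeping removed. Let $G_1, \dots, G_q$ be the $2$-connected components of $G$ that are not already cacti. If $q = 0$, answer \yesalgo; if $q = 1$, call the deterministic $2$-connected routine once; if $q \ge k+1$, answer \noalgo. Otherwise $2 \le q \le k$: for each $G_i$ and each $k_j \in [k]$ run the deterministic $2$-connected routine on $(G_i, k_j)$, let $k_i'$ be the smallest value for which it answers \yesalgo\ (answer \noalgo\ if there is none), and answer \yesalgo\ iff $\sum_{i=1}^q k_i' \le k$; correctness follows from Lemma~\ref{lemma:2-conn-cactus}. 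This makes at most $k^2$ calls to a routine running in $2^{\calO(k)} \cdot n^{\calO(1)}$ time, so the overall running time is $2^{\calO(k)} \cdot n^{\calO(1)}$.

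The main obstacle to get right is the uniformity issue in the second paragraph: the derandomization must not depend on knowing $S$ (equivalently, on knowing a hypothetical solution), and one must check that the $(n,6k)$-universal set of Proposition~\ref{prop:universal-set} delivers, simultaneously for every choice of $S$, a coloring in $\calF$ matching the guaranteed compatible coloring on $S$ — which is precisely what the definition of a universal set provides. A secondary point to verify is that the $k^{\calO(\log k)}$ factor inherited from Proposition~\ref{prop:universal-set} is absorbed into $2^{\calO(k)}$, so that the final bound is genuinely single-exponential in $k$.
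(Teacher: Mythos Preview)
Your proposal is correct and follows precisely the approach the paper takes: it simply states that in the algorithm of Theorem~\ref{thm:cactus-con-rand} one replaces the repeated random colorings by the coloring family of Lemma~\ref{lemma:coloring-family}. Your write-up is in fact a more detailed and careful rendering of that one-line argument, including the justification that $\calF$ does not depend on the unknown set $S$ and that $k^{\calO(\log k)} \subseteq 2^{\calO(k)}$.
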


\section{Conclusion}
In this article, we presented a single exponential time \FPT\ algorithm
for \textsc{Cactus Contraction}. 
Given an instance $(G, k)$, the algorithm runs in 
$2^{\calO(k)}\cdot |V(G)|^{\mathcal{O}(1)}$ time where $c$ is a 
fixed constant and correctly determines whether it is a \yesinst\ instance. 
A natural future direction is to optimize $c$ and to study other 
contraction problems that admit single exponential time algorithms.

It will be interesting to see if we can obtain a single exponential time
algorithm for \textsc{(Treewidth $\le \eta$)-Edge Contraction}
for some fixed constant $\eta$.
In this problem, the input is graph $G$ and an integer $k$ and the 
objective is to check whether one can contract at most $k$ edges 
in $G$ to obtain a graph whose treewidth is at most $\eta$.
Our original motivation to study \textsc{Cactus Contraction} 
came from understanding the case when $\eta = 2$.
Towards this, the next natural steps are to obtain such improved
algorithm for \textsc{Outer-planar Contraction} and \textsc{Series-Parallel Contraction}.

\section*{Acknowledgment: } We would like to thank Prof.  Saket Saurabh
for invaluable advice and several helpful suggestions.
%% The Appendices part is started with the command \appendix;
%% appendix sections are then done as normal sections
%% \appendix

%% \section{}
%% \label{}

%\section*{References}
%% If you have bibdatabase file and want bibtex to generate the
%% bibitems, please use
%%
%\bibliographystyle{apa}
\bibliographystyle{elsarticle-num}
 
\bibliography{ref}

%% else use the following coding to input the bibitems directly in the
%% TeX file.

%\begin{thebibliography}{ref.bib}

%% \bibitem{label}
%% Text of bibliographic item

%\bibitem{}

%\end{thebibliography}
\end{document}